\newtheorem{theorem}{Theorem}
\newtheorem{problem}{Problem}
\newtheorem{lemma}{Lemma}
\newtheorem{proposition}{Proposition}
\newcommand{\df}[1]{\mbox{\textbf{#1}}}
\def\Cost{\text{\emph{routing cost}}}
\def\aCost{\text{\emph{additional cost}}}
\newcommand{\gv}{\widetilde{G}}
\newcommand{\ink}{I}
\newcommand{\cc}{C}
\newcommand{\inkcoeff}{k_{\textit{ink}}}
\newcommand{\lencoeff}{k_{\textit{len}}}
\newcommand{\capcoeff}{k_{\textit{cap}}}
\newenvironment{proof}{\begingroup\Proof}{\qed\endgroup}
\def\Proof{\noindent{\bf Proof\/:}\nobreak}
\def\qed{\unskip~{\vrule height 4pt depth 0pt width 4pt}\medbreak}
\title{Edge Routing with Ordered Bundles}
\author{Sergey Bereg\thanks{
University of Texas at Dallas, USA
E-mail: {\tt besp@utdallas.edu}}
\and
Alexander E. Holroyd\thanks{
Microsoft Research, USA
Email: {\tt holroyd@microsoft.com}}
\and
Lev Nachmanson\thanks{
Microsoft Research, USA
Email: {\tt levnach@microsoft.com}}
\and
Sergey Pupyrev\thanks{
Ural Federal University, Russia
Email: {\tt spupyrev@gmail.com}}
}
\date{}
\begin{document}
\maketitle

\begin{abstract}
Edge bundling reduces the visual clutter in a drawing
of a graph by uniting the edges into bundles. We propose a method of edge
bundling drawing each edge of a bundle separately as in metro-maps and call our method ordered
 bundles. To produce aesthetically looking edge routes it minimizes a cost function on the edges.
The cost function depends on the ink,  required to draw the edges, the edge lengths, widths and separations.
The cost also penalizes for too many edges passing through narrow channels by using the constrained Delaunay triangulation.
The method avoids unnecessary edge-node and edge-edge crossings.
To draw edges with the minimal number of crossings and separately within the same bundle
we develop an efficient algorithm solving a variant of the metro-line crossing minimization problem.
In general, the method creates clear and smooth edge routes giving
an overview of the global graph structure, while still drawing each edge separately and thus enabling local analysis.
\end{abstract}




\section{Introduction}
\label{sec:intro}
Graph drawing is an important visualization tool for exploring and
analyzing network data.
The core components of most graph drawing algorithms are
computation of positions of the nodes and edge routing. In this paper
we concentrate on the latter problem.

For many real-world graphs with substantial numbers of edges
traditional algorithms produce visually cluttered layouts. The
relations between the nodes are difficult to analyze by looking
at such layouts. Recently, edge bundling techniques have been
developed in which some edge segments running close to each other are
collapsed into bundles to reduce the clutter.  While these
methods create an overview drawing, they typically allow the
edges within a bundle to cross and overlap each other
arbitrarily, making individual edges hard to follow.  In
addition, previous approaches allow the edges to overlap the nodes and obscure their text or graphics.

We present a novel edge routing algorithm for undirected
graphs. The algorithm
produces a drawing in a ``metro line'' style (see Fig.~\ref{fig:jazz}).  The graphs for
which our algorithm is best applicable are of medium size with
a large number of edges, although it can process larger graphs
efficiently too.

\begin{figure}[t]
	\centering
	\subfigure[]{
	\includegraphics[height=5.0cm]{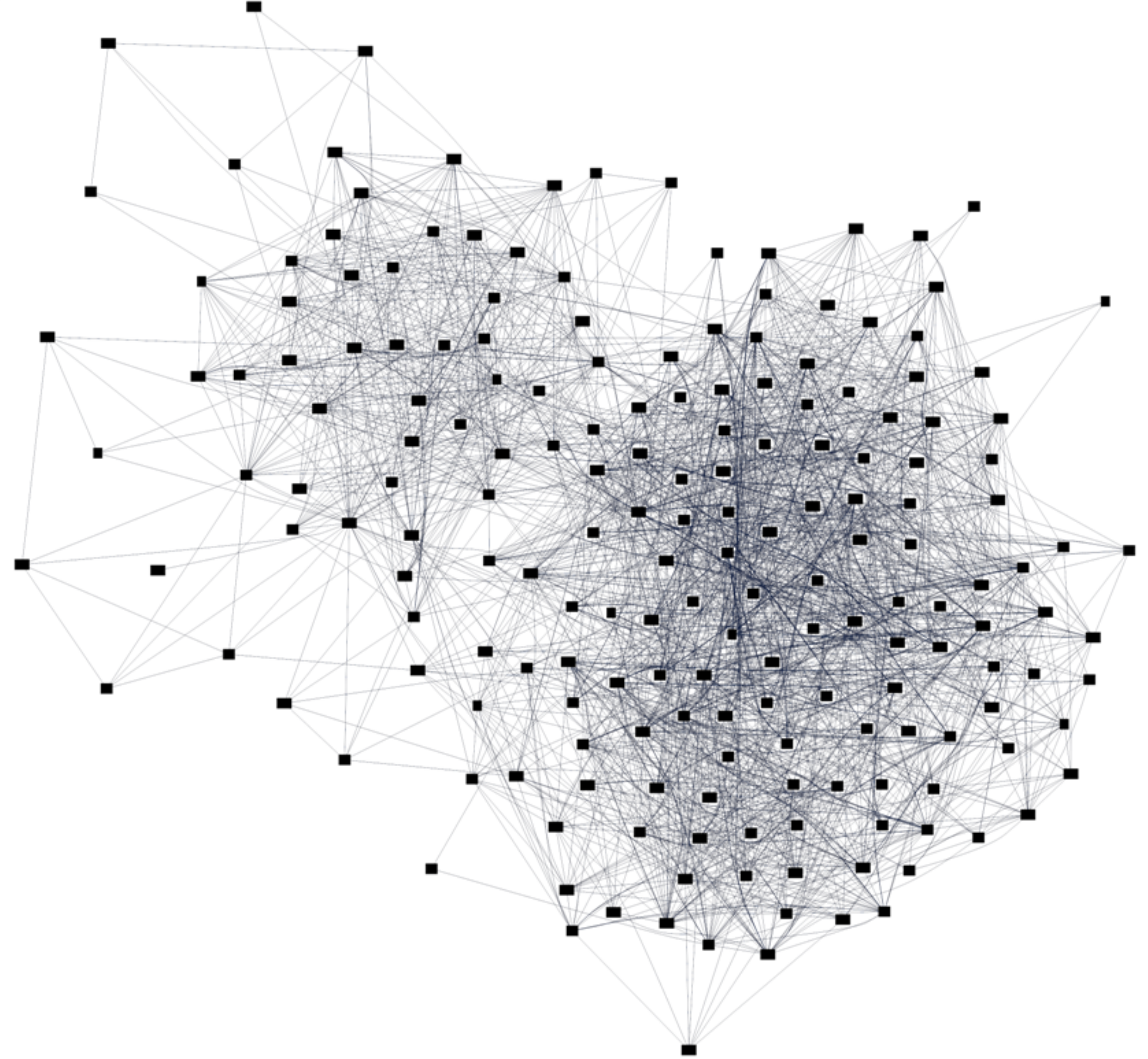}}
~~~~~~~~~~~~~
	\subfigure[]{
	\includegraphics[height=5.0cm]{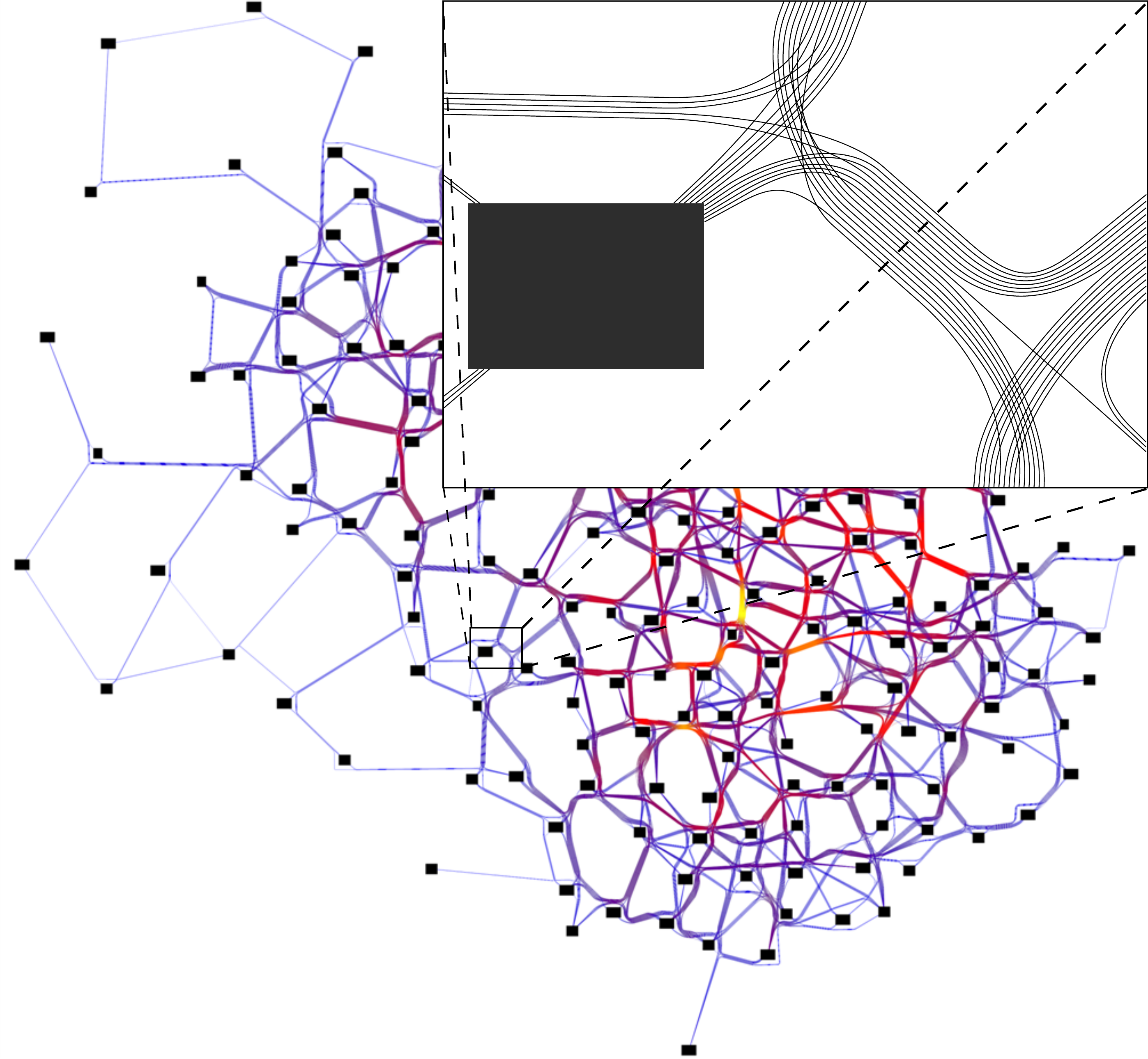}}
	\caption{Edge bundling example (\texttt{jazz} graph)}
    \label{fig:jazz}
\end{figure}

The input for our algorithm is an undirected graph with given
node positions. These positions can be generated by a graph
layout algorithm, or, in some applications (for instance,
geographical ones) they are fixed in advance. During the
algorithm the node positions are not changed. The main steps of
our algorithm are similar to existing approaches, but with
several innovations, which we indicate with \emph{italic text} in the
following description.

\textbf{Edge routing.} The edges are routed along the edges of some underlying graph
and the overlapping parts are organized into bundles.
One approach here has been to minimize the total ``ink'' of
the bundles. However, this often produces excessively long paths.
\emph{For this reason we introduce a novel cost function for
edge routing}. Minimizing this function forces the paths to share routes,
creating bundles, but at the same time it keeps the paths relatively
short. Additionally, the cost function penalizes routing too
many edges through narrow gaps between the nodes. \emph{We furthermore show how
to route the bundles around the nodes.}

\textbf{Path construction.} In this step the paths belonging to the
same bundle are ``nudged'' away from each other. The effect of
this action is that individual edges become visible and the
bundles acquire thickness. \emph{Contrary to previous
approaches, we try to draw the edges of a bundle as parallel as
possible with a given gap.} However, such a routing might not
always exist because of the limited space between the nodes.
 We provide a heuristic that finds a drawing with bundles of a suitable thickness.

\textbf{Path ordering.} To route individual edges, an order of
the edge segments within each bundle needs to be computed.
This order minimizes the number of crossings between the edges of
the same bundle. The problem of finding such an order is
a variant of the metro-line crossing minimization problem.
\emph{We provide a new efficient algorithm that solves this
problem exactly}.

The next section summarizes related work. In
Section~\ref{sec:algorithm} we introduce the ordered edge bundling
algorithm, followed by the detailed explanation of the algorithm steps
 in Sections~\ref{sec:edge-routing}-\ref{sec:order}.
Results of experiments are presented in
Section~\ref{sec:experiments}. We discuss some additional issues and future work in
Section~\ref{sec:conclusion}.


\section{Related work}
\textbf{Edge routing.}
The problem of drawing edges as curves that avoid node interiors
has been considered in several graph drawing papers. Dobkin~et~al.~\cite{dobkin97}
initiated a visibility-based approach in which the shortest
obstacle-avoiding route for an edge is chosen. The method requires the computation of the visibility graph, which
can be too slow for graphs with hundreds of nodes. An improvement
was introduced by Nachmanson~et al.~\cite{lev09}, achieving faster
edge routing using sparse visibility graphs. The latter method
constructs approximate shortest paths for the edges.  We use
the method as the starting point of our algorithm.

Dwyer~et~al.~\cite{dwyer07} proposed a force-directed method, which is able to
improve a given edge routing by straightening and shortening
edges that do not overlap node labels. However, a feasible
initial routing is needed. Moreover, the method moves the nodes, which is
prohibited in our scenario.

The problem of finding obstacle-avoiding paths in the plane arises in several
other contexts. In robotics, the motion-planning problem is to find a
collision-free path among polygonal obstacles. This problem is usually solved
with a visibility graph, a Voronoi diagram, or a combination thereof~\cite{wein07}.
Though these methods can be used to route a single edge in a graph, they do
not apply directly to edge bundling. Many path routing problems were studied
in very-large-scale integration (VLSI) community. We use some ideas of~\cite{cole84,dai91} as will be explained later.

\textbf{Edge bundling.}
The idea of improving the quality of layouts via edge bundling is
related to edge concentration~\cite{newbery89}, in which a layered graph is covered by bicliques in order to reduce the number of
edges. Eppstein~et~al.~\cite{eppstein03,eppstein07} proposed
a representation of a graph called a confluent drawing. In a confluent
drawing a non-planar graph is represented by
merging crossing edges to one, in the manner of railway tracks.

We believe that the first discussion of bundled edges in the graph
drawing literature appeared in~\cite{koren06}.  There the authors
improve circular layouts by routing edges either on the outer
or on the inner face of a circle. In that paper the edges are
bundled by an algorithm that tries to minimize the total
ink of the drawing.  Here we follow a similar strategy, but in
addition we try to keep the edges themselves short (among other criteria).

In the hierarchical approach of \cite{holten06} the edges are
bundled together based on an additional tree structure.
Unfortunately, not every graph comes with a suitable underlying
tree, so the method does not extend directly to general
layouts.  In~\cite{sergey10} edge bundles were computed for
layered graphs.  In contrast, our method applies to general
graph layouts.

Edge bundling methods for general graphs are given
in~\cite{cui08,hu11,holten09,lambert10}. In the force-directed
approach~\cite{holten09} the edges can attract each other to
organize themselves into bundles. The method is not efficient,
and while it produces visually appealing drawings, they are
often ambiguous in a sense that it is hard to follow an individual edge.
The approaches~\cite{cui08,lambert10} have a common feature:
they both create a grid graph for edge routing. Our method also
uses a special graph for edge routing, but our approach is
different;  we modify this graph to obtain better edge
bundles. And unlike~\cite{cui08,hu11,holten09}, we avoid edge-node
overlaps.

Recently there were several attempts to enhance edge bundled
graph visualizations via image-based rendering techniques.
Colors and opacity were used to highlight the density of
overlapping lines~\cite{cui08,holten06}.  Shading, luminance,
and saturation encode edge types and edge
similarity~\cite{ersoy11}. We focus on the geometry of the edge bundles only.
The existing rendering techniques may be applied on top of our routing scheme.

\textbf{Path ordering.}
The problem of ordering paths along the edges of an embedded graph
is known as metro-line crossing minimization (MLCM) problem. The problem was introduced by
Benkert~et~al.~\cite{benkert07}, and was studied in several
variants in~\cite{asquith08,bekos08,argyriou09,nollenburg09}.
We mention two variants of the MLCM problem that are related
to our path ordering problem.

Asquith~et~al.~\cite{asquith08} defined the so-called MLCM-FixedSE problem,
in which it is specified whether a path terminates at the top or
bottom of its terminal node.
For a graph $G=(V,E)$ and a set of paths $P$ they give the
algorithm with $O(|E|^{5/2}|P|^3)$ time complexity.
A closely related problem called MLCM-T1, in which all paths
connect degree-$1$ terminal nodes in $G$, was considered by
Argyriou~et~al.~\cite{argyriou09}.  Their algorithm computes
an optimal ordering of paths and has a running time of
$O((|E|+|P|^2)|E|)$. Recently, N{\"o}llenburg~\cite{nollenburg09}
presented an $O(|V||P|^2)$-time algorithm for these variants of
the MLCM problem.  Our algorithm can be used to solve both the
MLCM-FixedSE and MLCM-T1 problems with complexity $O(|V||P|+|V|\log |V|+|E|)$.
To the best of our knowledge, this is the fastest solution to date of
these variants of MLCM.

Crossing minimization problems have also been researched in the circuit
design community~\cite{groeneveld89a,mareksadowska95,chen98}.
Given a global routing of nets (physical wires) among the circuit modules, the
goal is to adjust the routing so as to minimize crossings between the pairs of the nets.
In this context, a problem similar to the path ordering problem was studied by
Groeneveld~\cite{groeneveld89a}, who suggested $O(|V|^2|P|)$ algorithm.
Another method, which works for graphs of maximum degree $4$, is given in~\cite{chen98}.
Marek-Sadowska~et~al.~\cite{mareksadowska95} considered the problem
of distributing the set of crossings among circuit regions. They
introduced constraints for the number of allowed crossings in each
region, and provided an algorithm to build an ordering with these
constraints. Their algorithm has $O(|V|^2|P|+|V|\xi^{3/2})$ complexity,
where $\xi$ is the number of path crossings.


\section{Algorithm}
\label{sec:algorithm}

In this section we introduce some terminology and give an overview of our algorithm.
The input to the edge bundling algorithm
comprises an undirected graph $G=(V,E)$, where $V$ is the set of nodes and $E$
is the set of edges. Each node of the graph comes with its \df{boundary curve}, which
is a simple closed curve. The boundary curves of different nodes are disjoint, and the
interiors they surround are disjoint too.
The output of the algorithm is a set of paths $P=\{\pi_{st}: st \in E\}$, where for an edge $st \in E$
its path is referenced as $\pi_{st}$; it is a simple curve in the plane connecting the
boundary curve of $s$ with the boundary curve of $t$. The input also includes a required
path width for each edge, and a desired path separation.

\begin{figure}[t]
	\centering
	\includegraphics[width=\linewidth]{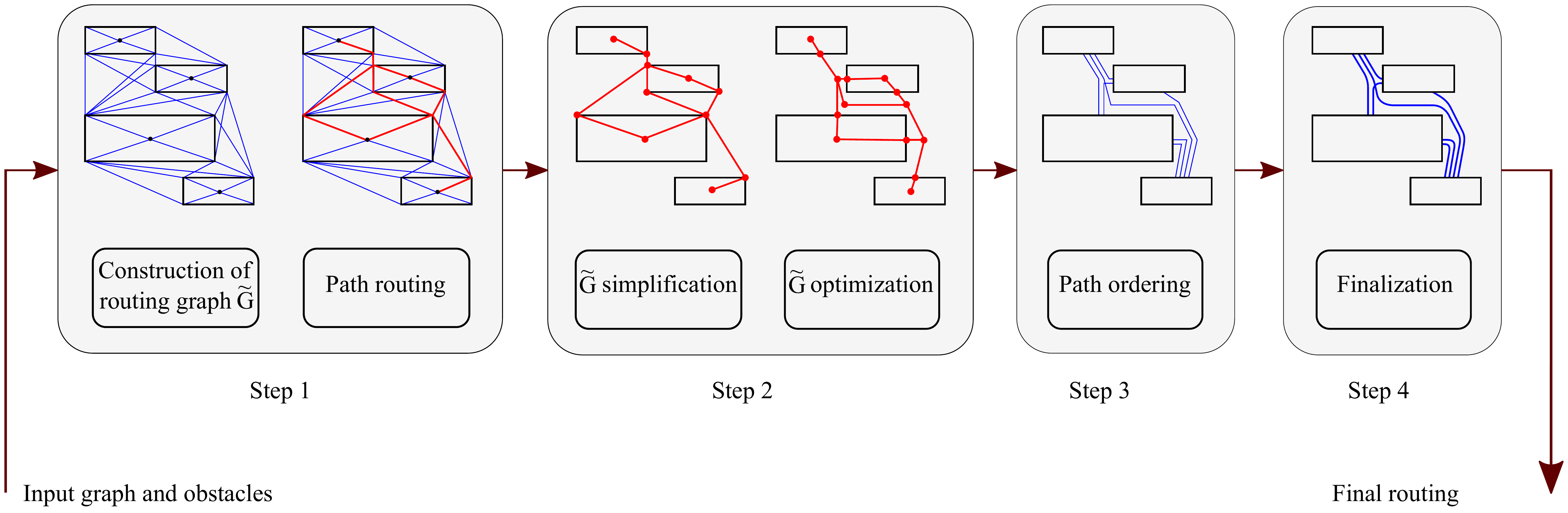}
	\centering
	\caption{Algorithm pipeline.}
    \label{fig:steps}
\end{figure}

\emph{Step 1} of the edge bundling algorithm produces an initial routing of the paths. This routing is done
on an auxiliary graph $\gv$ called the \df{routing graph}. The edges of $\gv$ are straight-line segments on the plane.

\emph{Step 2} improves the routing. Firstly, $\gv$ is cleaned up by removal of
all its nodes and edges that are not part of any path. Secondly, $\gv$ is modified to 
accommodate the paths together with the path widths and separations, and to improve the paths.

In steps 1 and 2 we use a notion of \df{ink} similar to \cite{koren06,sergey10}.
Given a graph embedded in the plane with straight-line edges,
the ink $\ink$ of a set of paths in the graph is the sum of the lengths of the
edges of the graph used in these paths; if an edge is used by several paths its
length is counted only once.  A set of paths sharing an edge of the
graph is called a \df{bundle}.  It is important to note that in the final drawing
 the paths in a bundle will be drawn as separate curves, so $\ink$ does not represent the amount of ink
 used in the final drawing.

\emph{Step 3}.  Consider a bundle with the shared edge $uv$ in $\gv$.
We would like to draw the segments of the paths that run along $uv$ as distinct curves parallel to $uv$.
This requires ordering of the paths of each bundle. Step 3 produces such an order for all bundles
 in a way that minimizes crossings (which occur at the nodes).

\emph{Step 4} draws each path as a smooth curve using straight-line,
cubic Bezier, or arc segments.

The overall algorithm is illustrated in Fig.~\ref{fig:steps}, and discussed in
detail in the next sections.

\section{Edge Routing}
\label{sec:edge-routing}

\subsection{Generation of the Routing Graph}
\label{sec:gengv}
As a preprocessing step, for each node $s \in V$ we create a convex polygon $O_s$
containing the boundary curve of $s$ in its interior. We keep the number of corners in $O_s$
bounded from above by some constant. We call $O_s$ an \df{obstacle},
and build the obstacles in such a way that $O_s$ and $O_t$ are disjoint for different $s$ and $t$.
For each $s \in V$ we choose a point inside of the boundary curve of $s$, which of course is
inside of $O_s$ too, and call this point a center of $O_s$, see Fig.~\ref{fig:obstacle}.

The algorithm starts with the construction of the routing graph $\gv$.
We follow the approach of~\cite{lev09} building a sparse visibility graph.
The vertices of $\gv$ comprise the obstacle centers, the obstacle corners,
 and some additional vertices  from the boundaries of $O_s$. These additional point are added to make $\gv$ rich enough for path generation. The points added to $O_s$ have the property that each cone with the apex at the center of $O_s$ and
with an angle given in advance, $\pi /6$ in our settings, contains at least one corner of $O_s$, or at least one new added point, see Fig~\ref{fig:gvnotshrunk}. To build the visibility edges of $\gv$ adjacent to vertex $u$, the vertex is surrounded by a family of cones covering the plane with the apexes at $u$, and in each cone at most one edge is created by connecting $u$ with another vertex of $\gv$ which is visible from $u$, belongs to the cone, and is closest to $u$ in the cone. The angles of the cones are  $\pi /6$ in our default settings.
\begin{figure}[t]
	\centering
	\subfigure[]{
	\includegraphics[height=1.5cm]{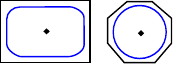}
    \label{fig:obstacle}}
~~~~~~~~~~~~~~~~~~~~~~
	\subfigure[]{
	\includegraphics[height=1.5cm]{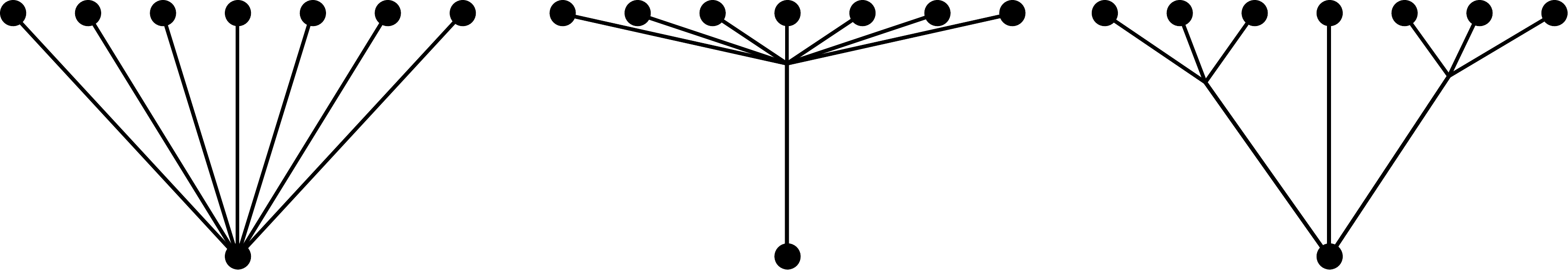}
    \label{fig:pureink}}
    \caption{(a)~Boundary curves (blue) of two nodes with its obstacles (black) and centers.
    (b)~(left)~a graph needed to draw; (center)~edge routing with small ink; 
    (right)~edge routing with larger ink and shorter paths.}
\end{figure}

\begin{figure}[t]
	\centering
	\subfigure[]{
	\includegraphics[height=3.5cm]{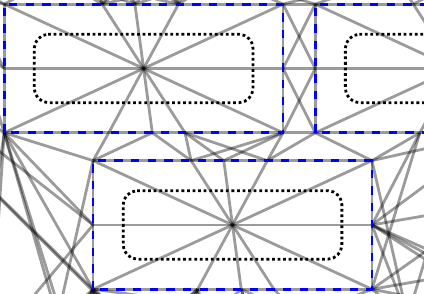}
    \label{fig:gvnotshrunk}}
~~~~~~~~~~~~~~~~~~~~~~
	\subfigure[]{
	\includegraphics[height=3.5cm]{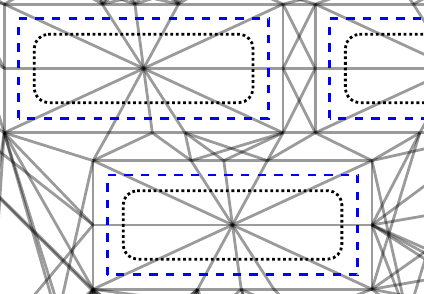}
    \label{fig:gvshrunk}}
    \caption{A fragment of $\gv$ is shown. (a)~The boundary curves are rendered with dotted curves, the obstacles with dashed curves, and
the edges of $\gv$ with solid curves.
    (b)~The obstacles are shrunk, so the visibility edges of $\gv$ do not cross or touch the obstacles.}
\end{figure}

The method works in $O(r\log r)$ time and creates $\gv$ with
$O(r)$ edges, where $r$ the number of vertices in $\gv$.

There are two types of edges in $\gv$: the visibility edges, and the edges from the obstacle centers.
A visibility edge has both its ends belonging to the obstacle boundaries and it does not intersect
the interior of any obstacle.  An edge of the second type is adjacent to an obstacle center.
  Such an edge does not intersect the interior of any obstacle except of the one
containing the center, see Fig.~\ref{fig:gvnotshrunk}. We denote the set of nodes of $\gv$ by $W$, and
the set of edges of $\gv$ by $U$. By construction $r=O(|V|)$, and therefore $\gv$ has $O(|V|)$ nodes and $O(|V|)$ edges.

Further on, it is necessary for our method to keep the visibility edges of $\gv$ disjoint from the obstacles.
For this purpose after computing $\gv$ we shrink the obstacles slightly, so that each obstacle still contains its boundary curve,
but the visibility edges do not touch or intersect the obstacles anymore, see Fig.~\ref{fig:gvshrunk}.

Next we route the paths on $\gv$.

\subsection{Routing Criteria and Routing Cost}
To create bundles we apply the idea of ink minimization~\cite{koren06,sergey10,hu11}.
However, minimizing the ink alone does not always produce satisfactory results; some
paths in a routing with a small ink might have sharp angles or long
detours as shown in Fig.~\ref{fig:pureink}. We improve the paths by introducing
additional criteria.

\begin{itemize}
\item {\em Short Path Lengths}. The length of a path $\pi_{st}$ is denoted by $\ell_{st}$.
One approach would be to minimize the sum of $\ell_{st}$.  However, this gives insufficient weight to paths that connect relatively close pairs of nodes. For this reason
we instead consider the normalized lengths $\ell_{st}/|st|$, where $|st|$ is the
distance between the centers of $O_s$ and $O_t$.

\item {\em Capacity}.
We would like to avoid routing too many paths through a narrow gap.
For this purpose we adapt an idea of~\cite{cole84,dai91}, where capacities are associated with
segments connecting obstacles. To construct these segments, we compute the
constrained Delaunay triangulation (CDT)~\cite{DBLP:journals/gis/DomiterZ08} with the constrained edges
 being the obstacle sides. A Delaunay edge connecting two different obstacles is called a \df{capacity segment}.
We penalize for too many paths passing through a capacity segment.

Let us define \df{capacity} $c_\sigma$ of a capacity segment $\sigma$. Suppose $\sigma$ connects points $a \in A$ and $b \in B$
for obstacles $A$ and $B$. Then $c_\sigma = ( |a,B| + |b,A|)/2$,
where $|a,B|$ is the distance from $a$ to $B$, and $|b,A|$ is the distance from $b$ to $A$.

During routing a path we assign it to the capacity segments intersected by the path, see Fig.~\ref{fig:capacitySegs}.
\begin{figure}[t]
	\centering
	\includegraphics[height=3cm]{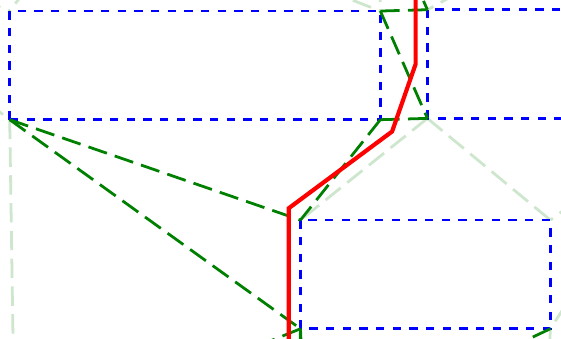}
  
\caption{ The solid curve represents a path, the long dashed opaque line segments represent the capacity segments crossed by the path, short dashed line segments represent the shrunk obstacle boundaries.}
  \label{fig:capacitySegs}
\end{figure}
For each capacity segment $\sigma$ we select all paths assigned to $\sigma$, and calculate
\df{routing width} $w_\sigma$ as the total path
widths plus the total path separation.
\df{Capacity overflow penalty}  of $\sigma$ is defined as
$p_\sigma=(w_\sigma-c_\sigma)^+$.
The \df{total capacity overflow} {\bf penalty} is defined as $C=\sum_\sigma p_\sigma$, where the sum is taken over all capacity segments.
\end{itemize}

Minimizing the ink, the paths lengths, and the capacity overflow penalty often is impossible at the same time.
For example, smaller ink usually leads to longer paths.
The capacity penalty can force some overlapping paths to go through different channels thus possibly increasing the ink.
To find a solution satisfying all the criteria we consider a multi-objective
optimization problem: minimize  the \df{routing cost} of a set of paths $P$ defined as

$$\Cost = \inkcoeff \ink + \lencoeff \sum_{\pi_{st}\in P} \frac{\ell_{st}}{|st|} + \capcoeff \cc .$$

We stress here that the $\Cost$ is used only in steps 1 and 2 of our algorithm. Steps 3 and 4
producing the final routing preserve the overflow penalty and try to preserve the path lengths.
However the notion of ink, as we defined it, is not applicable to the final routing.

\subsection{Routing of Paths}
\label{path routing}
We try to route the paths on $\gv$ with the minimal $\Cost$.
The problem is formulated as follows.

\begin{problem}[Path Routing]
\label{problem:MIER} Given the graph $\gv=(W,U)$ and a set of pairs
of nodes $(s_i t_i) \in W^2$, find paths between $s_i$ and
$t_i$ for all $i$ so that $\Cost$ is minimized.
\end{problem}

Here $s_i$ and $t_i$ correspond to the centers of obstacles connected by the
edges of $E$. Problem~\ref{problem:MIER} is NP-hard, because its instance
with $\inkcoeff = 1$, $\lencoeff = \capcoeff = 0$ is a Steiner
Forest Problem~\cite{garey79}. Therefore we suggest a heuristic to optimize
$\Cost$. In this heuristic we solve an easier task, where some paths are already known
and we need to route the next path. We will route it by minimizing an $\aCost$,
which is the increment of the $\Cost$ associated with this path. For a path $\pi_{st}$ we have
$\aCost = \inkcoeff\ \Delta\ink + \lencoeff\ \ell_{st}/|st| + \capcoeff \ \Delta C$.
Here $\Delta\ink$ is the increment in the ink, which is the sum of
edge lengths of $\pi_{st}$ that were not part of any previous path; $ \Delta C$ is the growth
of $\sum_\sigma {p_\sigma}$, where the sum is taken over all capacity segments $\sigma$ assigned
to path $\pi_{st}$.

\begin{problem}[Single Path Routing]
\label{problem:MIERS} Given the graph $\gv=(W,U)$ and a set $P$ of already routed paths
on $\gv$, find a path from $s$ to $t$ so that the $\aCost$ is minimized.
\end{problem}

To solve the problem, for each edge $e$ of $\gv$ we assign weight $ \inkcoeff\ \delta_e+\lencoeff\
\ell_e / |st| + \capcoeff \ \Delta C_e$ to $e$ . Here $\delta_e$ is $\ell_e$ if $e$ is not taken by a
previous path, and $0$ otherwise. The value of $\Delta C_e$ is the growth of $\sum_\sigma {p_\sigma}$
for the case the path is passing through $e$, where the sum is taken
over all capacity segments $\sigma$ crossed by $e$.

It can be seen that the minimum of $\aCost$ is achieved by a shortest path from $s$ to $t$
when we use these weights. We apply the Dijkstra's algorithm to find a path solving the problem.

To find a solution for Problem~\ref{problem:MIER},
we organize paths in a sequence $(s_1 t_1),\dots, (s_m t_m)$,
and iteratively solve Problem~\ref{problem:MIERS} for $(s_k t_k)$,
with the paths $(s_i t_i)$, $i<k$ already routed, for $k=1,\dots,m$.

 The routing of a single path takes $O(|U|\log |W|)$ time with the Dijkstra's algorithm.
 Routing all paths amounts to $O(|E||U|\log |W|)$ steps.

\begin{problem}[Multiple Path Routing]
\label{problem:MIERSS} Given the graph $\gv$ with some paths
already routed, find paths for $(s^* t_1),\dots,(s^* t_k)$ so
that $\aCost$ is minimized.
\end{problem}

We can solve this problem by a dynamic programming approach. We
first fix a set of pre-existing paths in $\gv$; $\aCost$ will
always be with respect to these paths. Let us call a
\df{state} a pair $(v,P)$, where $v$ is a node of $\gv$, and
$P$ is a subset of $\{t_1,\dots,t_k\}$. We need to solve our
problem for the state $(s^*,\{t_1,\dots,t_k\})$. We reduce the
problem to solving it for ``smaller'' states, that are the
states with fewer elements in $P$. For a state $(v, P)$ we
define its cost $f(v,P)$ as the minimal $\aCost$ of a set of
paths $\{(v, t), t \in P\}$.  A set of paths giving the minimal
$f(v,P)$ is called an \df{optimal} set for state $(v,P)$. Let
us clarify the structure of an optimal set of paths.

By the subgraph generated by a set of paths in $\gv$ we
mean the subgraph of $\gv$ comprising all edges and nodes in
the paths.

\begin{lemma}
\label{lemma:tree} For each state there exists an
optimal set of paths that generates a tree.
\end{lemma}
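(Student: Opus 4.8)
The plan is to establish the tree structure by contradiction, assuming an optimal set of paths whose generated subgraph contains a cycle, and then showing we can eliminate the cycle without increasing $\aCost$. First I would fix the set of pre-existing paths, so that all cost increments are measured relative to them, and consider an optimal set of paths $\{(v,t): t\in P\}$ achieving $f(v,P)$. Let $H$ denote the subgraph of $\gv$ generated by these paths. If $H$ is already a forest (in fact a tree, since all paths share the source $v$ and hence $H$ is connected), we are done; otherwise $H$ contains a cycle $\gamma$.

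The key idea is that the $\aCost$ of a set of paths, as defined in the excerpt, depends only on the \emph{set of edges} used by the paths, not on how many times each edge is traversed: the ink increment $\Delta\ink$ counts each newly-used edge length once, and the capacity overflow growth $\Delta C$ is determined by the routing widths on the capacity segments, which in turn depend on which paths cross them. The length term $\lencoeff \sum \ell_{st}/|st|$ does depend on the individual path lengths, but here is where I would use the fact that each individual path should itself be a shortest path in $H$ (otherwise we could shorten it, reducing the length contribution without adding any edges to $H$). So the strategy is to argue that some edge of the cycle $\gamma$ can be deleted from $H$, after rerouting any paths using that edge around the remaining part of the cycle, in a way that does not increase the total cost.

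More concretely, I would take an edge $e$ on the cycle $\gamma$ and consider the paths that traverse $e$. Because $e$ lies on a cycle, there is an alternate route between its endpoints through the rest of $\gamma$, all of whose edges are already present in $H$. Rerouting the offending paths along this alternate route removes $e$ from the generated subgraph while adding no new edges, so $\Delta\ink$ does not increase. The subtle points are (i) ensuring the capacity overflow penalty does not increase, and (ii) controlling the length term, since the detour around $\gamma$ could be longer than $e$. The main obstacle will be handling these two terms simultaneously: I expect one must choose which edge of the cycle to delete judiciously — for instance, delete the edge whose removal is cost-favorable — and possibly reroute every path in the optimal set to be a shortest path in the reduced graph, invoking optimality to conclude the cost cannot have gone up. An alternative and cleaner route, which I would prefer, is to argue directly about edge sets: since ink and capacity depend only on the edge set while lengths are minimized by shortest paths, an optimal solution can be taken to route every path as a shortest path within the generated subgraph $H$; if $H$ has a cycle, removing a carefully chosen cycle edge yields a subgraph $H'$ whose induced shortest paths have no larger ink and capacity cost, and the optimality of $f(v,P)$ then forces equality, so the tree-generating set is also optimal.

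The cleanest presentation will likely proceed by a minimality argument: among all optimal sets of paths for the state $(v,P)$, choose one whose generated subgraph has the fewest edges. I would then show this minimal subgraph must be acyclic, since any cycle would admit a cost-non-increasing edge deletion as above, contradicting minimality in edge count while preserving optimality in cost. This sidesteps delicate case analysis on the three cost terms by pushing the work into the observation that shortest-path rerouting within a subgraph never increases any of the three components beyond what the edge set already forces.
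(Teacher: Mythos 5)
Your plan circles the right ingredients but leaves the central step unproved, and the patch you propose for that step points the wrong way logically. In all three variants of your argument, everything hinges on the claim that a cycle edge can be deleted ``carefully'' so that rerouting the affected paths does not increase the cost. The ink term is fine (the edge set shrinks), but you never show how to choose the edge so that the \emph{length} term does not increase: deleting a cycle edge can strictly increase the shortest-path distance from $v$ to some target $t\in P$, in which case the rerouted solution is strictly worse and no contradiction with edge-minimality arises. Your fallback --- ``invoking optimality to conclude the cost cannot have gone up'' and ``the optimality of $f(v,P)$ then forces equality'' --- has the inequality backwards: optimality of the original solution only tells you the new solution costs \emph{at least} as much; to conclude the new solution is also optimal you must separately prove it costs \emph{at most} as much, and that upper bound on the length term is exactly the missing piece.

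The missing idea, which is essentially the paper's whole proof, is the shortest path tree: let $G^{\Pi}$ be the (connected) subgraph generated by an optimal set $\Pi$, let $T$ be a shortest path tree of $G^{\Pi}$ rooted at $v$ with respect to ordinary edge lengths, and reroute every path along $T$. This one-shot construction simultaneously yields acyclicity, an edge set contained in that of $G^{\Pi}$ (so the ink increment does not grow), and paths that are shortest in $G^{\Pi}$, hence no longer than the originals (so the length term does not grow). Note that your cycle-deletion scheme can only be repaired by this same fact --- any cycle contains an edge outside a fixed shortest path tree, and that is the edge whose deletion preserves all distances from $v$ --- so the detour through edge-minimal optimal solutions buys nothing over the direct construction. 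A secondary inaccuracy: your claim that capacity ``depends only on the edge set'' is false under the paper's definitions, since the routing width $w_\sigma$ sums the widths of \emph{all} paths assigned to $\sigma$ and so depends on path multiplicities; the paper sidesteps the capacity term here not by that argument but by omitting it from the cost in this subproblem, as the recursion for $f(v,P)$ shows.
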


\begin{proof}
Let $\Pi$ be any optimal set of paths for state $(v,P)$, and
$G^{\Pi}$ be the graph generated by $\Pi$, and note that it is
connected.  Let $T$ be a shortest path tree of $G^{\Pi}$,
rooted at $v$, with respect to
ordinary edge lengths. Let $\Pi'$ be the set of paths
connecting $v$ to the points of $P$ in $T$. The $\aCost$ of
$\Pi'$ is at most that of $\Pi$. Indeed, the increment in
$\ink$ is no greater because $T$ is a subgraph of $G^{\Pi}$.
Each path of $\Pi'$ is shortest in $G^{\Pi}$ and thus no longer than
the corresponding path of $\Pi$. Hence, $\Pi'$ is an optimal
set for $(v,P)$.
\end{proof}

Lemma~\ref{lemma:tree} leads us to the following formula.

$$
f(v,P) = \min
\begin{cases}
f(u,P)+\inkcoeff\ \delta_{vu}+ \lencoeff\ \sum_{t \in P} \ell_{vu}/|s^*t|,
& \text{for $u\in W$ adjacent to $v$,}
\\
f(v,P')+f(v,P - P'), &\text{for $P'$ with } \emptyset\subset P'\subset P
\end{cases}
$$

The minimum is taken over both expressions on the right
as $u$ and $P'$ vary. To verify this, we consider some
optimal set of paths for $(v,P)$ that form a tree, and split
into two cases.  The first line corresponds to the case where
$u$ is the only neighbor of $v$ in the tree.  The second line
is the case where $v$ has at least two neighbors, thus the
paths can be partitioned into two proper subsets with no common
edges.

\begin{figure}[t]
	\subfigure[]{
	\includegraphics[height=3cm]{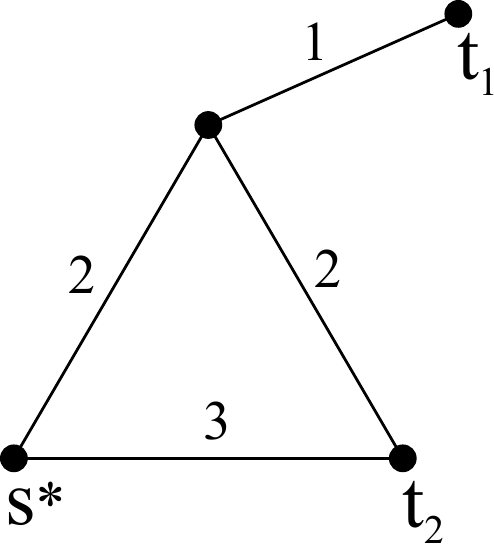}}
~~~~~~~~~~~~~~~~
	\subfigure[]{
	\includegraphics[height=3cm]{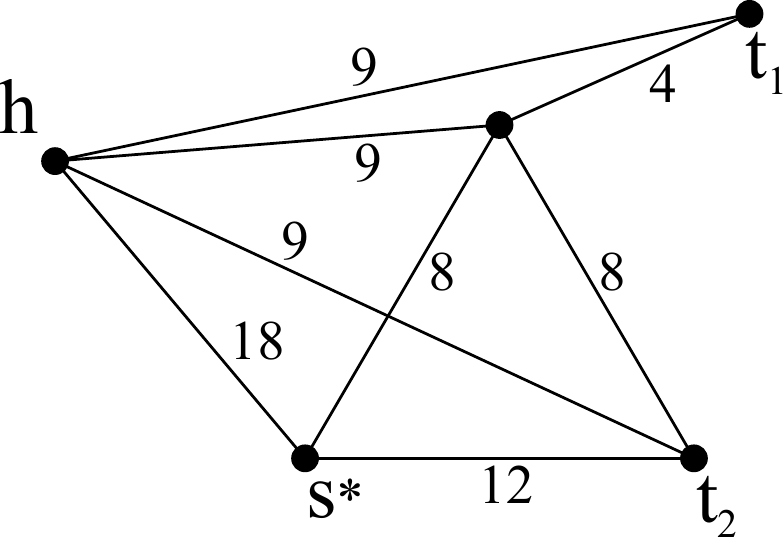}}
~~~~~~~~~~~~~~~~
	\subfigure[]{
	\includegraphics[height=3cm]{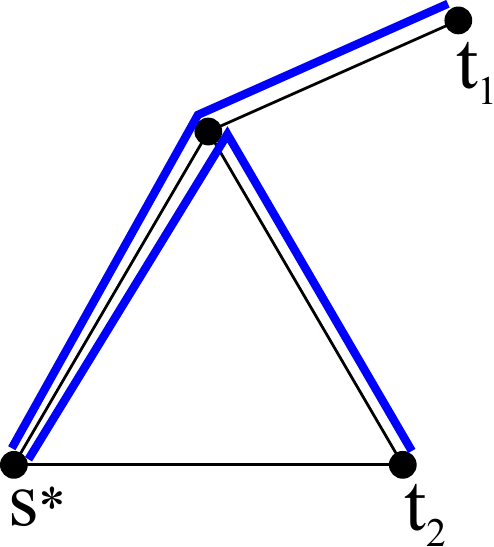}}
	\centering
	\caption{An example for Problem~\ref{problem:MIERSS} with
$\inkcoeff=2, \lencoeff=1, \capcoeff=0$, and $|s^*t_1|=|s^*t_2|=1$.
The goal is to find paths $s^*t_1$ and $s^*t_2$ with minimal $\aCost$.
(a)~Graph $\gv$, weights $\ell_e$ for each edge $e$ are shown.
(b)~Constructed graph $H$ with edge weights.
(c)~Optimal paths (blue) with $\aCost=17$.}
    \label{fig:mpr}
\end{figure}

Now we describe how to compute $f(v,P)$. Let us assume that
$f$ is known for all states $(u,P')$ for $P'$ a proper
subset of $P$. To compute $f(v,P)$, a new graph $H$ is
constructed with $\gv$ as a subgraph. An edge $e$ of $\gv$ has
weight $\inkcoeff\ \delta_e + \lencoeff\ \sum_{t \in P} \ell_{e}/|s^*t|$ in $H$.
 We add a new node $h$ to $H$ and connect it with all
nodes of $\gv$. For every new edge $hu$ we assign weight
$\min_{P'}{f(u,P')+f(u,P-P')}$, where $P'$ varies over proper
non-empty subsets of $P$ (see Fig.~\ref{fig:mpr}). One can see that the required value
$f(v,P)$ is the length of a shortest path from $v$ to $h$ in
graph $H$. We can compute it using Dijkstra's algorithm.

To solve Problem~\ref{problem:MIERSS} we work bottom-up. We
first compute all $f(v,P)$ with $|P|=1$ and $v$ is a node of
$\gv$, by the algorithm for Problem~\ref{problem:MIERS}, where
we find a path with the minimal $\aCost$.  Then we compute the
values $f(v,P)$ for each $v$ and $|P|=2, \ldots, k$ by creating
the corresponding graphs $H$. Finally, the answer for the
problem is $f(s^*,\{t_1,\dots,t_k\})$.

\paragraph{Running time}
The main steps of the above algorithm are the
construction of graph $H$, and finding a shortest path on it with the
Dijkstra's algorithm for each state $(v,P)$. Luckily, graph $H$ depends
only on the $P$ component of a state. The construction of graph $H$
for a fixed set $P$ takes $O(2^{|P|}|W|)$ time.  We execute the Dijkstra's
algorithm only once per $P$ starting from $h$ to compute $f(v,P)$ for
all $v\in W$. Thus, finding $f(v,P)$ for a known $H$ and for all $v\in
W$ takes $O(|W|\log |W|)$ time. Summing over all possible sets $P$ gives
$$
O\Bigl(\sum_{P}\bigl[2^{|P|}|W|+|W|\log |W|\bigr]\Bigr) = O\bigl(3^k|W| + 2^k|W|\log |W|\bigr).
$$

\section{Path Construction}
\label{sec:path construction}

\subsection{Structure of the paths}
Before proceeding to the optimization of $\gv$, we need to give more details on the structure of the paths that we produce and define some additional constructions on $\gv$.

We call a node $v \in W$ an \df{intermediate} node if it is not an obstacle
center. Recall that a bundle is a set of paths sharing the same edge of $U$.
That means that as soon as the paths are routed on $\gv$, the bundles become defined. In the final drawing we
would like to draw the paths of a bundle respecting the width and path
separation and outside of the obstacles. As described in Section~\ref{sec:gengv},
we shrink the obstacles after constructing $\gv$,
therefore each intermediate node of $W$ lies outside the obstacles; an obstacle intersects
an edge of $U$ only if the edge is adjacent to the obstacle center.
To give a general idea of the final path structure, we refer Fig.~\ref{fig:pathsdetailed}.

\begin{figure}[t]
    \centering
\subfigure[]{
    \includegraphics[height=3.5cm]{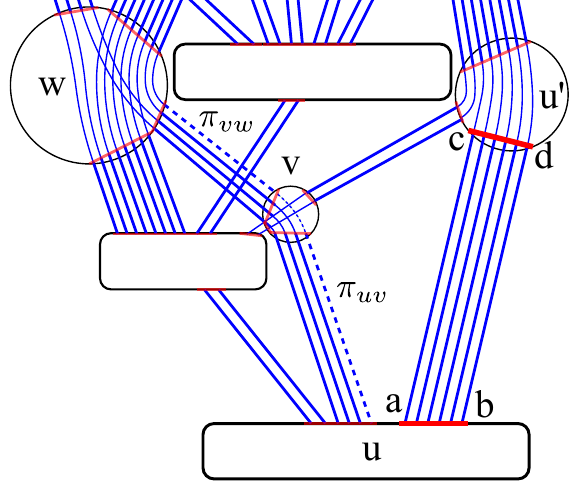}
    \label{fig:pathsdetailed}}
~~~~~~~~~~
\subfigure[]{
	\includegraphics[height=3cm]{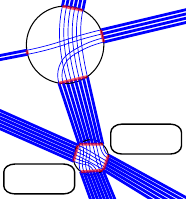}
    \label{fig:bentrect}}
~~~~~~~~~~
\subfigure[]{
	\includegraphics[height=2.5cm]{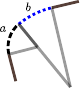}
    \label{fig:biarc}}
~~~~~~~~~~
\subfigure[]{
	\includegraphics[height=2.5cm]{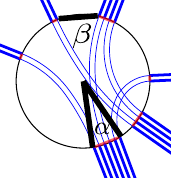}
    \label{fig:radii}}
    \caption{
    (a)~$[ab]$ and $[cd]$ are the bundle bases of edge $uu'$.
    The circles represent the hubs of the intermediate nodes, and the
    rounded rectangles represent the hubs of the obstacle centers.
    Segments $\pi_{uv}$, the hub segment, and $\pi_{vw}$ for the dashed path.
    (b)~An example of bundle bases when not enough free space is found.
    (c)~An example of a biarc consisting of two arcs $a$ and $b$.
    (d)~The hub radius should be large enough that
    $\alpha \le \frac{\pi}{2}$ and $\beta \ge \textit{path separation}$.}
\end{figure}

We surround each node of $\gv$ by a simple closed curve. In the case of an obstacle center it is
the boundary curve of the node, and in the case of an intermediate node it is a circle disjoint from the
obstacles.  We call these curves \df{hubs}, and denote the hub of node $v$ by $h_v$.
For each node $u \in W$ and each edge $uv \in U$ we construct a line segment with
both its ends belonging to $h_v$.  We call these segments the \df{bundle bases} of edge $uv$.
 When only one path passes through an edge its bundle bases collapse to points.

Each path $\pi$ from the bundle of edge $uv$ is represented by a straight
line segment connecting the bundle bases corresponding to $uv$. We denote such a segment
$\pi_{uv}$ and refer it as a \df{bundle segment}. The bundle segments of different paths
are drawn in a particular order, as discussed in Section~\ref{sec:order} below.
 If path $\pi$ passes through consecutive nodes $u,v,w$, then
 we connect the segments $\pi_{uv}$ and $\pi_{vw}$ by
a smooth curve called a \df{hub segment}. We construct the hub segment in a way that it is tangent
to $\pi_{uv}$ and $\pi_{vw}$, and is contained in $h_v$. For the hub segments we use biarcs, following~\cite{biarc}.
A biarc is a smooth curve formed by two circular arcs with a common point, see Fig.~\ref{fig:biarc}.
Therefore, in the final drawing each path is represented as a sequence of straight line
bundle segments and biarc hub segments.

 Next we give a heuristic that moves intermediate nodes of $\gv$ to
 find space for hubs and bundles. Ideally, we would like to keep
the bundle bases large enough to accommodate all the bundle segments with their widths and
separations. However, this will not always succeed, as in Fig.~\ref{fig:bentrect}.

\subsection{Optimization Goals}
We change positions of intermediate nodes of $\gv$ to provide enough space for
drawing paths. For this we compute how much free space we need around bundles
and hubs. The ideal bundle width for edge $e$, denoted by $e_w$, is the sum of
all path widths of the bundle, together with the total of the separations between them. For
example, for a bundle with three paths of widths $1$, $2$ and $2$, and a path
separation of $1$, the bundle width is $7$.

The radius of a hub for an intermediate node $u$ is the minimum of two radii: a
\df{desired} one, and an \df{allowed} one. The desired radius is at
least $\mu e_w$, where $e \in U$ is an adjacent to $u$ edge, and $\mu$ is a
positive constant (see Fig.~\ref{fig:radii}). In our implementation, $\mu=\frac{1}{\sqrt{2}}$.
In addition, the desired radius is chosen large enough that any two bundles are
separated before entering the hub by at least the given edge separation.
We also bound the desired radius from above by some predefined constant, to avoid
huge hubs. The allowed radius should be small enough that: (a)~the hub interior
does not intersect other hubs, and (b)~the hub is disjoint from the obstacles.

We say that the hub of an intermediate node $u$ is \df{valid} if conditions (a) and (b) above
are satisfied, and in addition for each edge $uv$ the straight line segment from the center of
$h_u$ to the center of $h_v$ does not intersect any obstacle (if $v$ is an intermediate
node) or intersects only obstacle $O_v$ (otherwise). The routing graph is \df{valid}
if all intermediate hubs are valid.

Our optimization procedure finds a valid graph $\gv$ while pursuing two objectives. First,
the radii of all hubs should be as close to the desired radii as possible. Second,
we try to keep $\Cost$ of $\gv$ small.

\subsection{Routing Graph Optimization}
Let us denote by $p_v$ the position of node $v$ of $\gv$.
Before the optimization step we have $\gv$ in a valid state. We find hub radii (perhaps very small)
keeping $\gv$ valid, but we might have not enough
space for the hubs and the bundles.
In particular, we might have hubs for which the desired radius is larger than the allowed radius.
The optimization starts by trying to move each such hub away from obstacles into another
valid position, where the allowed radius is closer to the desired radius than before.
This is done in a loop, for which we initially set $r$ to the desired radius of node $v$.
A move is attempted along vector $\ell$ with the direction $\sum_s u_s$, where
$u_s=\frac{\overrightarrow {O_sp_v}}{\overrightarrow {|O_sp_v|}}$, and the sum is taken over all
obstacles $s$ intersecting the circle of radius $r$ with $p_v$ as the center.  The length of $\ell$
is $\theta r$ for some fixed $\theta >1$, which is $1.1$ in our implementation.  If the new position of $v$ is invalid,
we diminish $r$ and repeat the procedure. We stop iterations for a node after $10$ unsuccessful moves.

After this step we obtain more space for the paths, but $\Cost$ is typically increased.
Therefore we again iteratively adjust the positions of intermediate nodes to diminish the cost.
In one iteration we consider an intermediate node $u$ of $\gv$ and try to find a valid position
$p_u$ which maximally decreases $\Cost$. We assume that our optimizations do not change
the capacity segments crossed by each path; hence, we only try to find the local minimum of
ink and path lengths components of the cost. Consider the function which shows the contribution of node $u$ to $\Cost$:
$f(p_u) = \inkcoeff \sum_v |p_u p_v| + \lencoeff \sum_{\pi_{st}} (|p_u p_v| + |p_u p_w|)/|st|$,
where the first sum is taken over all neighbors of $u$, and the second sum is taken over
all paths passing through nodes $v,u,w$. To minimize $f(p_u)$, we use gradient descent method
by moving $u$ along the direction
$D = \inkcoeff \sum_v{ \frac{\overrightarrow {p_up_v}}{|p_up_v|}} +
\lencoeff \sum_{\pi_{st}} \frac{1}{|st|} (\frac{\overrightarrow{p_u p_v}}{|p_u p_v|} + \frac{\overrightarrow{p_u p_w}}{|p_u p_w|})$ using a small fixed step size.
If $\Cost$ is reduced, and the new position is still valid, we repeat the procedure,
otherwise the move is discarded and we proceed with the next node.
We pass a predefined number of times over all nodes of the graph; this number is $2$ in our implementation.

After this optimization procedure we also try to modify the structure of graph $\gv$, if it is beneficial.
We try the following modifications: (1)~shortcut an intermediate node of degree two;
(2)~glue adjacent intermediate nodes together.
These transformations are applied only if they reduce $\Cost$ while keeping the graph valid.

What is the complexity of the above procedure? Let $c$ be the time required to find
out if a circle, or a straight line segment, intersects an obstacle. Using an R-tree~\cite{guttman84}
on the obstacles, one can find out if a circle or a rectangle intersects the
obstacles in $O(c\log |W|)$ time. The number of edges in $\gv$ is $|U|$,
therefore, an iteration optimizing the position of every node of $\gv$ can be
done in $O(c |U|\log |W|)$ time. Since we apply a constant number of iterations,
this is also a bound for the whole procedure.

\section{Ordering Paths}
\label{sec:order}
At this point the routing is completed and the bundles have
been defined. We draw the paths of a given bundle parallel to
the corresponding edge, therefore two paths may need to cross
at a node as shown in Fig.~\ref{fig:ordering}. The order of paths
in bundles affects crossings of paths.  Some path crossings cannot be avoided since they are induced by the edge crossings of $\gv$, while others depend on the relative
order of paths along their common edges, and thus might be avoided.
We would like to find the orders such that only unavoidable crossings remain. Let $P$ be the set of simple paths in $\gv$ computed by path
routing. We address the following problem.

\begin{figure}[ht]
    \center
	\subfigure[]{
	\includegraphics[height=3cm]{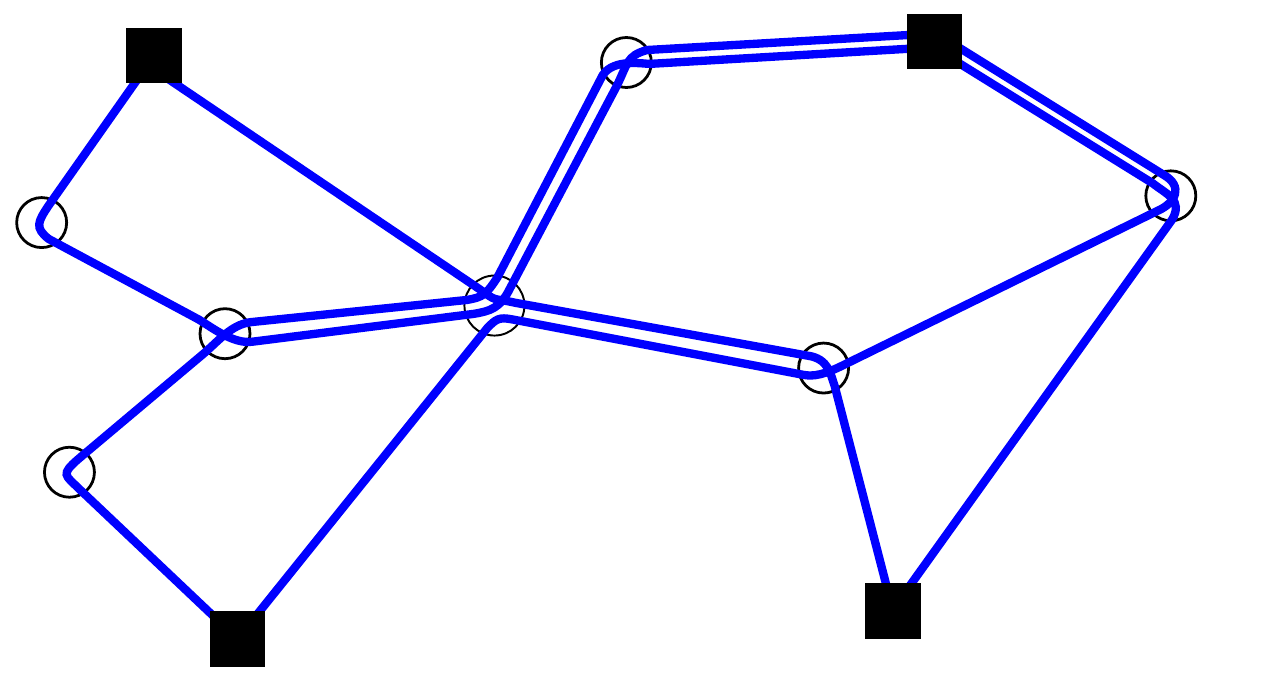}}
~~~~~~~~~~~~~~~~
	\subfigure[]{
	\includegraphics[height=3cm]{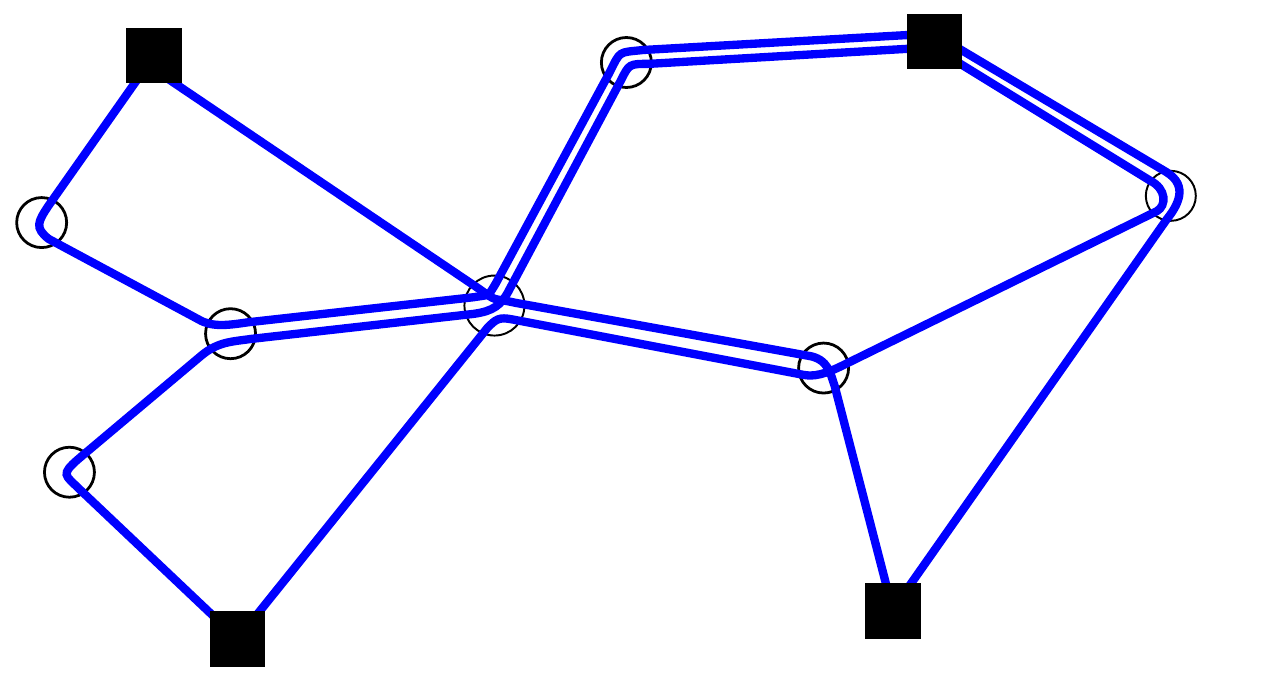}}
    \caption{Graph $\gv$ with 4 terminal (squares) and 7 intermediate (circles) nodes.
    (a)~Ordering with 6 crossings.
    (b)~Optimal ordering with 3 crossings.}
    \label{fig:ordering}
\end{figure}

\begin{problem}
\label{problem:MLCM}
Given the embedded graph $\gv$ and a set of simple
paths~$P$, find an ordering of paths for each edge of $\gv$
that minimizes the total number of crossings among all pairs of paths.
\end{problem}

The complexity class of Problem~\ref{problem:MLCM} is
unknown. However, in our setting paths satisfy an additional property
that makes possible to solve the ordering problem efficiently.
We now describe this property.

For path $(v_1,\dots,v_k)\in P$ the nodes $v_1$ and $v_k$ are
called \df{terminal}, and the nodes $v_2,\dots,v_{k-1}$ are
called \df{intermediate}. In our setting, the paths terminate
at the nodes of $\gv$ corresponding to the centers of the obstacles.
These are the only nodes that can be terminal for paths. The rest of
the nodes are intermediate. Thus we have:

\emph{Path Terminal Property:} No node of $\gv$ is both a terminal of
some path and an intermediate of some path.

We thus consider the following variant of the crossing
minimization problem.

\begin{problem}[Path Ordering]
\label{problem:MLCM-T} Given an embedded graph $\gv$ and a set of
simple paths $P$ satisfying the path terminal property, compute an
ordering of paths for all edges of $\gv$
so that the number of crossings between pairs of paths is minimized.
\end{problem}

We say that two paths
having a common subpath have an \df{unavoidable crossing} if they cross for every
ordering of paths.  An ordering of paths is \df{consistent} if the only pairs of paths
with unavoidable crossings cross (and these only once).
Clearly, a consistent ordering has the
minimum possible number of crossings. We will prove that a consistent ordering
always exists, and we provide two algorithms to construct one,
solving Problem~\ref{problem:MLCM-T}.
The first algorithm is a modification of a method for wire routing in VLSI
design proposed in~\cite{groeneveld89a}.  We call it the ``simple algorithm''
since it utilizes a simple idea, and it is straightforward to implement.
The second is a new algorithm that has a better computational complexity.

\subsection{Simple Algorithm}
To find an ordering of paths we iterate over the edges of
$\gv$ in any order, and build the order for each edge.
Let $P_{uv}$ be a set of paths passing through
the edge $uv$. We fix a direction of the edge, and construct the
ordering on it by sorting $P_{uv}$.  To define the order
between two paths $\pi_1,\pi_2\in P_{uv}$ we walk along their
common edges starting from $u$ until the paths end or the following cases happen
; If we find an edge which was previously processed by
the algorithm, we reuse the order of $\pi_1$ and $\pi_2$
generated for this edge. If we find a fork node $f$, by which we mean
the end of the common sub-path of the two paths, then the order between $\pi_1$ and
$\pi_2$ is determined according to the positions of the next nodes of
the paths following $f$, the path turning to the left is smaller.  Otherwise, if such a fork node is not found
(which means that the paths end at the same node),
we walk along the common sub-path in the reverse direction starting
from node $v$, and again looking for a previously processed edge or a fork.
The above procedure determines a consistent order for all 
pairs except of the pairs of coincident paths; such paths are ordered arbitrarily on the edge.

This ordering creates unavoidable crossings only. As mentioned earlier, the above algorithm is a minor variant
of one in~\cite{groeneveld89a}. The proof of correctness follows the same
lines as in~\cite{groeneveld89a}. Here we discuss the main differences
between the two variants.
First, for the sake of simplicity we do not modify the graph and the paths
during computations. Second, in our setting the nodes of $\gv$  might have
an arbitrary degree; in particular, several paths might share an endpoint,
and thus the degree of the terminals is not bounded. That is why to compare
two paths we traverse their common subpath in both directions. The running
time of our algorithm is also different from that in~\cite{groeneveld89a}.
Our algorithm iterates over the edges
of $\gv=(W,U)$, which takes $O(|U|)$ time. For each edge it sorts paths
passing through it. A comparison of two paths requires at most $O(|W|)$ steps
(maximal length of a common subpath). Let $|P_e|$ be a number of paths passing through edge
$e\in U$. Then the sorting on the edge takes $O(|W||P_e|\log |P_e|)$ time.
Since $\sum_{e\in U} |P_e| = L$, where $L$ is the total length of paths in $P$,
and $|P_e|\le |P|$, the algorithm runs in $O(|W|L \log |P| + |U|)$ time.
However, this is a pessimistic bound. Normally, the length of a common subpath
of two paths is much smaller than $|W|$, and the number of paths passing
an edge is smaller than $|P|$, that makes the algorithm quite fast in practice.

The above algorithm is straightforward to implement. However, the following slightly
more complicated algorithm has a better asymptotic computational complexity.

\subsection{Linear-Time Algorithm}
\begin{figure}[htb]
    \center
	\includegraphics[height=3.2cm]{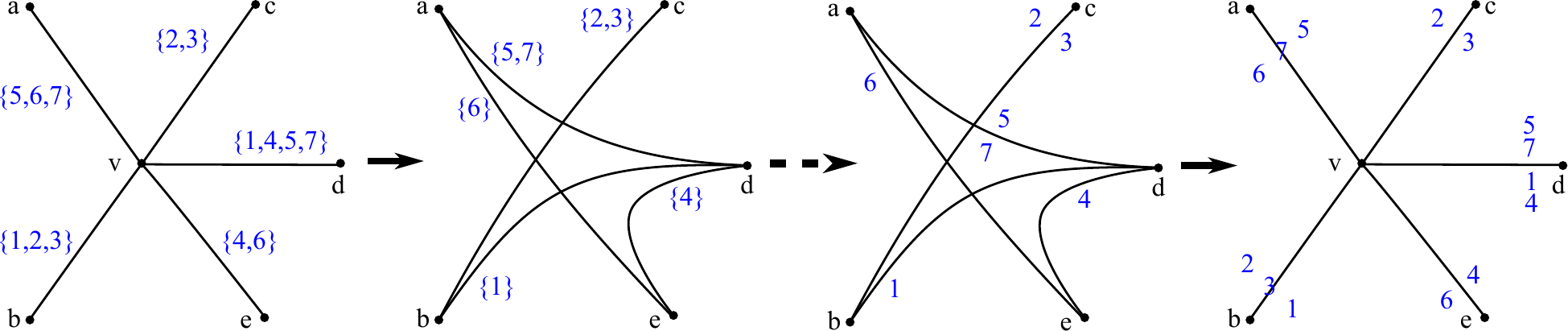}
    \caption{Removal of node $v$; intervening steps; re-insertion of $v$. Paths
    of $P_v$ are numbered $1,\ldots,7$, and are shown in braces when unordered,
    and then with placement indicating their orders.
    The removal of $v$ defines the order between paths 4 and 6 on edge $ve$.
    The order between paths 2 and 3 was calculated in the intervening steps.}
    \label{fig:order_g1}
\end{figure}

The algorithm consists of two phases.  Each step of \emph{Phase~1}
involves the deletion of a node of $\gv$ (Fig.~\ref{fig:order_g1}).
For each non-terminal node $v$ in turn we do the
following. Enumerate the edges incident to $v$ as
$e_1,e_2,\dots,e_t$ in clockwise order, and let $v_1,\dots,v_t$
be the corresponding nodes adjacent to $v$. Represent each path
$\pi\in P_v$ using edges $e_a$ and $e_b$ by an unordered
pair $\{a,b\}$.  For each pair $\{a,b\}$ add a new edge $v_av_b$.
Modify the path represented by $\{a,b\}$ so that it goes directly from
$v_a$ to $v_b$ using this new edge.
The new edges incident to $v_a$ should be inserted into $v_a's$
clockwise order in the position previously occupied by $e_a$,
in the order determined by the indexes of $e_b$. Finally,
delete node $v$ from the graph.

In \emph{Phase~2} the process is reversed.  Initially, all non-terminal nodes have been deleted,
and we re-insert them in the reverse order to which they were deleted, adding
orders of paths to the edges in the process.  Consider the re-insertion
of node $v$. The new order of paths along edge $e_a$ is obtained
by concatenating the orders of paths along the edges $v_av_b$.

We now give some details of the implementation.
First, create a new graph $H$.
Initially, $H$ is the subgraph of $\gv$ induced the union of all the
paths of $P$.  Each path in $P$ is stored as a list of nodes in $H$.
For an edge $e$, let $P_e$ denote the set of paths containing $e$. We
assume that for every node $v$ of $H$, the list of edges
incident to $v$ in clockwise order is given. Note that
these lists are dynamic since $H$ undergoes deletions of nodes.
We keep track of the deletions in a forest $F$. Initially, $F$
consists of isolated nodes corresponding to the edges of $H$. When
a node is deleted and an edge $e$ is replaced by new edges, we
add them in $F$ as children of the node corresponding to $e$.
For instance, $P_{dv}$ contains paths 1,4,5, and 7 in
Fig.~\ref{fig:order_g1}. When node $v$ is processed, set $P_{dv}$
is split into new subsets $P_{de}$, $P_{da}$, and $P_{db}$.
The (clockwise) order of subsets is important. Then we replace
edge $dv$ by edges $de$, $da$, and $db$ in graph $H$, in paths $P_{dv}$,
and in order of edges around $d$. We stress here that
graph $H$ might contain multiple edges after this operation,
since there could be another path passing from $d$ to any of $e$, $a$, or $b$.
It is important to keep multiple edges and corresponding
sets of paths separately. The first phase finishes when all
non-terminals are deleted from $H$.

In the second phase, we process each tree in $F$ in bottom-up
order from children to parents. The list of paths for a node of $F$ is
simply the concatenation of the lists of its children. The leaves of
$F$ correspond to the original paths of $P$.

\begin{theorem}
Given the graph $\gv=(W,U)$, a set of simple paths $P$
satisfying the path terminal property, and a clockwise order of
the edges around each node, the above algorithm computes a consistent ordering of
paths in $O(|W|+|U|+L)$ time, where $L$ is the
total length of paths in $P$.
\end{theorem}

\begin{proof}
\emph{Correctness}.
The ordering of paths computed by our algorithm is consistent
since the deletion of node $v$ adds only unavoidable crossings.
Two paths $\pi_1$ and $\pi_2$ will produce a crossing only when
the last node of their common subpath is deleted and the
clockwise order of the nodes around $v$ is $\cdots v_a\cdots
v_b\cdots v_{a'}\cdots v_{b'}\cdots$, where $\pi_1=\cdots v_a v
v_{a'}\cdots$ and $\pi_2=\cdots v_b v v_{b'}\cdots$.

\emph{Running time}.
The time for processing node $v$ (the deletion of $v$) is
$O(1+d_{v,H}+s_v)$, where $d_{v,H}$ is the degree of $v$ in
$H$ at the current step and $s_v$ is the number of paths passing
through $v$. The claimed bound follows since $d_{v,H}\le d_{v,\gv}+s_v$.
\end{proof}

We note that the size of the input for Problem~\ref{problem:MLCM-T}
is $\Omega(|W|+|U|+L)$, which is the same as the
running time of our algorithm if the clockwise order
of edges around nodes is part of the input. If this order is not
given in advance, the time complexity of the algorithm is
$O(|W|\log |W|+|U|+L)$. Since the length of a path does not exceed
$|W|$, we have $L= |W||P|$. Therefore, another estimate for the running time is 
$O(|W|\log |W|+|U|+|W||P|)$.

\subsection{Distribution of crossings}
Consistent orderings are not necessarily unique. We noticed that the
choice of a particular one may greatly influence the quality of the final
drawing. The following property might appear
desirable. An ordering of paths is \df{nice} if it is consistent, and for any
two paths, their order along all their common edges is the same
(i.e.\ they may cross only at an endpoint of their common
sub-path). We next analyze the existence of nice orders.

\begin{proposition}
If $\gv$ is a tree, then there is a nice order of paths $P$.
\end{proposition}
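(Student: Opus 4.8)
The plan is to build the order geometrically from the planar embedding of the tree and then read off what happens on each edge. First I would set up the pairwise picture. In a tree, the edges shared by two paths $\pi_1,\pi_2$ form a single contiguous subpath $C$, with a \df{fork} at each of its two ends (a node where the two paths diverge, or a common terminal). A \emph{nice} order must keep $\pi_1$ and $\pi_2$ in one fixed relative order along every edge of $C$; hence they may cross at most once, and that crossing must sit at a fork. So the heart of the matter is to decide, for each overlapping pair, (i) their common relative order along $C$ and (ii) whether a single, fork-located crossing is forced, and then to check that these decisions assemble into a genuine linear order on every edge with crossings occurring exactly when forced.

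To make the decisions coherently I would use the fact that a plane tree thickens to a topological disk whose boundary is a single cycle carrying, in a fixed cyclic order, all the incidences of paths at their terminals; for several paths sharing a terminal I place their endpoints at distinct boundary points in the order dictated by the clockwise order of their first edges (the remaining freedom, for paths sharing a first edge, is harmless because terminals are never intermediate by the path terminal property). Each path then becomes a chord of the disk, and I draw all chords simultaneously as taut (geodesic) curves, which hug the skeleton of the tree and run monotonically through the strip of every edge they traverse. The order of paths on an edge $e$ is defined as the order in which these strands meet a transversal cut across the strip of $e$; this is automatically a linear order, so well-definedness on each edge comes for free.

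It then remains to verify niceness and consistency, and this is where the real work lies. Two taut chords cross at most once, and they cross precisely when their boundary endpoints interleave; I would prove that, for a pair sharing a subpath $C$, interleaving of endpoints is equivalent to the \df{unavoidable crossing} condition at a fork of $C$, matching the interleaving criterion $\cdots v_a\cdots v_b\cdots v_{a'}\cdots v_{b'}\cdots$ used in the correctness proof of the Theorem, and that when the endpoints interleave the single crossing of the taut curves occurs at that fork, never in the interior of a shared strip. Granting this, strands never reorder inside a shared subpath, so the induced order is constant along all of $C$ (niceness), while the total number of crossings equals the number of unavoidable pairs, each realized exactly once (consistency); together these make the order nice. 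The main obstacle is exactly this translation between geometry and combinatorics: showing that taut chords cross only at forks and only for the interleaving pairs, uniformly over all pairs at once. This is where planarity of $\gv$ is essential, since it is the disk structure of the thickened tree that forbids any cyclic obstruction and lets all the fork-located, only-when-forced crossings coexist; for a general non-tree $\gv$ no such global disk model exists, which is precisely why niceness can fail there.
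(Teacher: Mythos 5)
Your disk model gets the \emph{counting} part of the problem right: thickening the plane tree to a disk, placing path endpoints on the boundary circle, and characterizing forced crossings by interleaving of endpoints is a correct and clean way to identify which pairs must cross (this matches the unavoidable-crossing condition, and taut chords realize the minimum number of crossings, $0$ or $1$ per pair). The genuine gap is in the \emph{location} of the crossings, which is exactly what niceness is about. Tautness pins down only how many times two chords cross, not where: a geodesic (or any ``taut'') realization of an interleaved pair will put its single crossing at some metric-dependent point, and there is no reason this point lies in the disk of a fork node. Concretely, take two paths sharing a two-edge corridor $u_0$--$w$--$u_k$, entering at $u_0$ from opposite sides and exiting at $u_k$ to opposite sides: they must cross once, and shortest representatives cross somewhere in the middle of the corridor, i.e.\ inside a shared strip or at the internal node $w$ --- precisely what a nice order forbids. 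This also undercuts your well-definedness claim: the order ``read off a transversal cut'' is only well defined once you already know strands never reorder inside strips, so it does not ``come for free.'' The step you flag as ``where the real work lies'' and promise to prove --- that the crossing of every interleaved pair can be placed at a fork, simultaneously for all pairs, while the per-edge orders remain genuine (transitive) linear orders --- is not a consequence of planarity or of the absence of a ``cyclic obstruction''; it \emph{is} the proposition, so as written the argument is circular at its crux.

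For contrast, the paper closes exactly this hole combinatorially rather than geometrically: root the tree at a terminal and process edges from the leaves toward the root. For any pair of paths, the first processed common edge is then automatically an end edge of their common subpath; directing each processed edge toward the root makes the pair's relative order be decided by their divergence at the far end of the common subpath, and this single decision is reused unchanged on all remaining common edges, so the forced crossing (if any) lands at an endpoint of the common subpath. If you want to salvage your approach, you would need an analogous mechanism --- e.g.\ sort the paths on each edge by the cyclic positions of their endpoints on one side, with the side chosen coherently along every common subpath (a rooted, sweep-like choice; a fixed ``deep side'' rule fails for pairs whose common subpath straddles its apex) --- and then verify both transitivity of the resulting per-edge orders and that no extra crossings are created. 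That verification is the actual content of the proof, and your proposal does not contain it.
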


\begin{proof}
To build a nice order of paths on a tree, we use the algorithm below.
Initially, choose any terminal node to be the root, and then consider the edges starting at the leaves
towards the root. The first time we encounter a common edge of two
paths will be at the end of their common sub-path. Thus, the crossing
between them (if needed) will take place on that edge. To ensure that the
crossing goes to the endpoint of the sub-path, we direct the
considered edge towards the root of the tree.
\end{proof}

We found an example of $(\gv,P)$ having no nice ordering.

\begin{proposition}
There exists $(\gv, P)$ that admits no nice ordering.
\end{proposition}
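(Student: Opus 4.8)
The plan is to exhibit an explicit embedded graph $\gv$ together with a set of paths $P$ and then argue directly that no ordering of $P$ can be nice. Since a nice ordering is in particular consistent, and a consistent ordering is guaranteed to exist, the obstruction must come entirely from the extra niceness requirement, namely that each pair of paths keeps a fixed relative order along the whole of their common subpath and so may cross only at one of its endpoints. I would therefore search for a configuration in which three paths are forced into a cyclic system of pairwise order constraints on a single shared edge.

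Concretely, I would construct $\gv$ so that there is an edge $e=uv$ traversed by three paths $\pi_1,\pi_2,\pi_3$, with each pair $\{\pi_i,\pi_j\}$ having an unavoidable crossing. The graph would be arranged, with a chosen clockwise order of edges around each node and with terminals placed so as to respect the path terminal property, so that the three common subpaths fork off from the neighbourhood of $e$ in a rotationally symmetric, cyclic pattern. A figure depicting this small configuration is the natural way to present the example.

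The core of the argument is to show that niceness forces a forbidden orientation on $e$. In a nice ordering the unique crossing of each pair $\{\pi_i,\pi_j\}$ occurs at an endpoint of their common subpath; hence their relative order is constant along that subpath and, in particular, along $e$. Using the left/right turning rule at the fork node of each pair (the path turning left being the smaller, exactly as in the simple algorithm), I would read off the forced order of each pair on $e$. The cyclic arrangement of the forks yields the three relations $\pi_1 <_e \pi_2$, $\pi_2 <_e \pi_3$, and $\pi_3 <_e \pi_1$, which cannot all hold for a linear order of $\{\pi_1,\pi_2,\pi_3\}$ on $e$. Therefore no nice ordering exists.

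The main obstacle I anticipate is the bookkeeping needed to make the three pairwise orders genuinely forced and genuinely cyclic at the same time. I must verify that each of the three crossings is truly unavoidable, so that a consistent, and hence nice, ordering is obliged to realize it; that the left/right turning conventions are tracked correctly against the fixed planar embedding; and that the resulting three constraints close up into a $3$-cycle rather than into a satisfiable chain. Choosing the embedding and the clockwise orders at the fork nodes so that the cyclic inconsistency is intrinsic, rather than an artifact of a poor choice, is the delicate part; everything else is a direct reading-off from the figure.
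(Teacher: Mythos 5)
Your plan hinges on the claim that, for a pair of paths with an unavoidable crossing, every nice ordering is \emph{forced} to order them on a shared edge $e$ according to the left/right turning rule at their fork node. That claim is false, and it is the decisive gap. The turning rule is only the tie-breaking choice made by the simple algorithm when it constructs one particular consistent ordering; it is not a constraint satisfied by all nice orderings. If $\pi_i$ and $\pi_j$ have an unavoidable crossing, a nice ordering may keep \emph{either} constant relative order along their common subpath: one choice places the single crossing at one endpoint of that subpath, the other choice places it at the other endpoint, and both are nice for that pair. Consequently none of your relations $\pi_1 <_e \pi_2$, $\pi_2 <_e \pi_3$, $\pi_3 <_e \pi_1$ is forced, and no contradiction follows. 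Indeed, in your configuration (three paths sharing $e$, each pair with an unavoidable crossing) \emph{any} linear order of the three paths, applied uniformly along the common subpaths, produces exactly one crossing per pair, located at whichever fork disagrees with that order; so it is consistent and nice, and your $(\gv,P)$ is not a counterexample at all.

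The natural repair also fails. For a pair whose crossing is avoidable, the order in a nice ordering \emph{is} forced (they must not cross, so their order must agree with the forks at both ends of their common subpath); but such forced relations cannot be made cyclic on a single shared edge. At a fork node a departing path lies on the same side of \emph{all} the paths that continue together --- its position is determined by where its next edge sits in the rotation at that node --- so, for example, if pair $(\pi_1,\pi_2)$ and pair $(\pi_1,\pi_3)$ both acquire their forced order at the node where $\pi_1$ leaves, then $\pi_1$ is forced to the same side of both, which is incompatible with $\pi_1<\pi_2$ and $\pi_3<\pi_1$. This is precisely why the paper's proof works differently: it builds a four-path gadget (a black path, two blue paths, a red path) in which the niceness of the red--black pair yields only a \emph{disjunctive} constraint --- the left blue path is above the black one \emph{or} the right blue path is below it --- exactly because each blue--black pair by itself admits two orders; it then chains seven gadgets, with blue paths shared between gadgets 1, 4 and 7, so that either branch of the disjunction for gadget 1 violates the same constraint in gadget 4 or in gadget 7. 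The idea missing from your proposal is this coupling of binary choices; pairwise ``forced order'' arguments of the kind you describe cannot yield any obstruction to niceness.
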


\begin{proof}
Consider a gadget with four paths shown in Fig.~\ref{ex1}.
It consists of a black path, two blue paths (called left and right
blue path respectively) and a central red path.
In any nice ordering the first (left) blue path must be
above the black path or the second (right) blue path must be
below the black path. Otherwise, if the left blue path is drawn below and
the right blue path is drawn above the black path, then red and black
paths do not intersect nicely (that is, their order along common
edges is not the same).

\begin{figure}[htb]
\includegraphics[width=\linewidth]{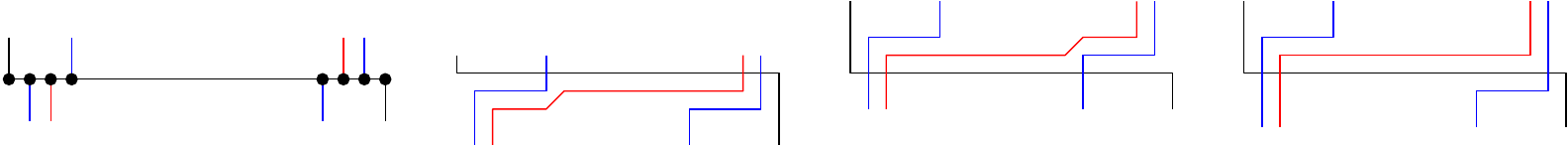}
\centering
\caption{The gadget and 3 ways to draw it.}
\label{ex1}
\end{figure}

Now consider the graph $\gv$ with $7$ gadgets in Fig.~\ref{ex2}, where
only black paths are shown. The gadgets do not share any paths except
(a) gadgets 1 and 4 share the blue path, and (b) gadgets 1 and 7
share the blue path. In a nice ordering either the left blue path for path 1
is above it or the second one is below it. In the first case, the
blue paths of gadget 4 do not satisfy the above property, see
Fig.~\ref{ex3} (a). This is a contradiction. In the second case, the blue
paths of gadget 7 do not satisfy the above property, as shown in Fig.~\ref{ex3} (b).
Again, this is a contradiction, which proves that there is no nice ordering
of considered paths on $\gv$.
\end{proof}

\begin{figure}[t]
	\subfigure[]{
    \label{ex2}
	\includegraphics[width=0.45\linewidth]{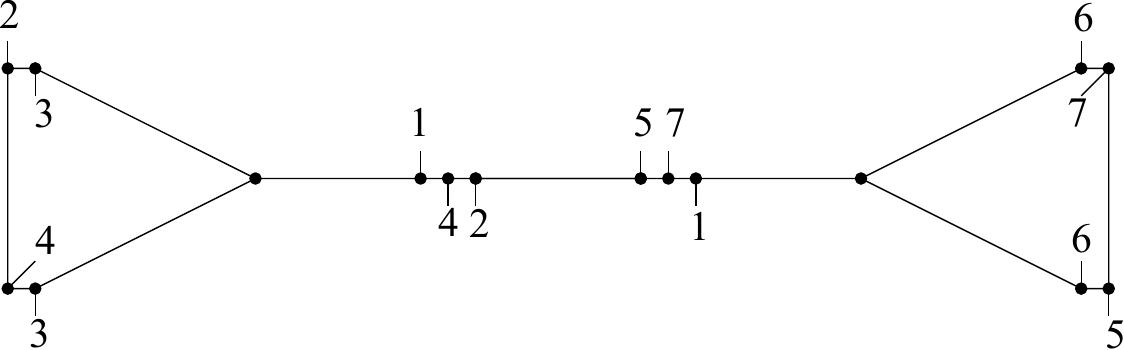}}
~~~~~~~~~~~
	\subfigure[]{
    \label{ex3}
	\includegraphics[width=0.45\linewidth]{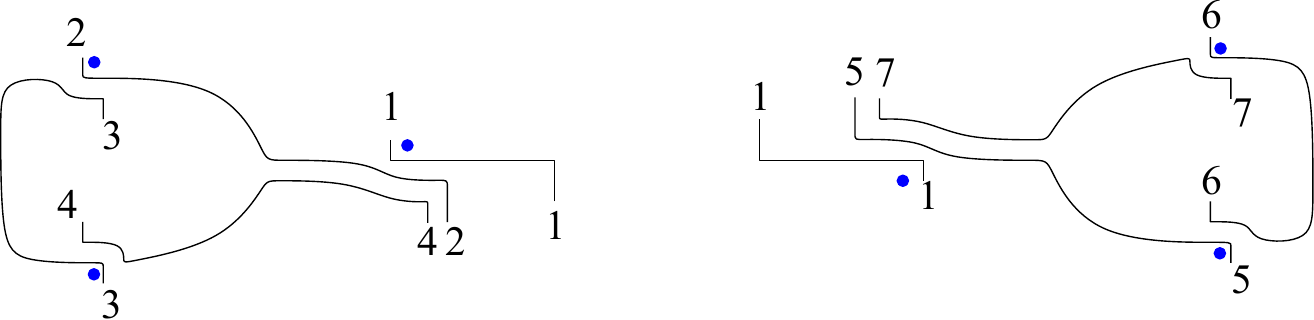}}
	\centering
	\caption{(a)~Graph $\gv$ consisting of 7 gadgets. Each black path starts at a leaf marked
    by a number and ends at a leaf with the same number.
    (b)~Two possible cases; the blue dots mark the location of the gadget's blue paths.}
\end{figure}

\section{Results}
\label{sec:experiments}
We implemented our algorithm in the MSAGL package~\cite{lev07}. Edge bundling was
applied for synthetic graph collections and several real-world graphs
(see~\cite{sergey10} for a detailed description of our dataset).
Unless node coordinates are available, we used the tool to position
the nodes. All our experiments were run on a 3.1~GHz quad-core machine
with 4 GB of RAM. Table~\ref{algorithm_speed} gives measurements of the
method on some test cases.

\begin{table}[t]
    \begin{tabularx}{\linewidth}{|>{\scriptsize}m{0.11\linewidth}|>{\scriptsize}m{0.11\linewidth}|>{\scriptsize}X|>{\scriptsize}X|>{\scriptsize}X|>{\scriptsize}X|>{\scriptsize}X|>{\scriptsize}X|>{\scriptsize}X|>{\scriptsize}X|>{\scriptsize}X|>{\scriptsize}X|}
        \hline
        {Graph}  & {source} & {$|V|$} & {$|E|$} & {$|W|$} & {$|U|$} & {Routing} & {Optimization} & {Ordering} & {Overall} \\
        \hline
        \texttt{tail}         	       & \cite{sergey10} & 21   &   68  &  105 &  348    & 0.13 & 0.17  & 0.01 & 0.34 \\
        \texttt{notepad}          	   & \cite{sergey10} & 22   &  113  &  198 &  776    & 0.03 & 0.29  & 0.01 & 0.35 \\
        \texttt{airlines}        	   & \cite{cui08}    & 235  & 1297  & 1175 &  5297   & 0.47 & 2.67  & 0.15 & 3.32 \\
        \texttt{jazz}           	   & \cite{gephi}    & 191  & 2732  &  955 &  4478   & 0.56 & 3.25  & 0.18 & 4.04 \\
        \texttt{protein}        	   & \cite{gephi}    & 1458 & 1948  & 7290 &  32585  & 1.45 & 11.92 & 0.10 & 13.52 \\
        \texttt{power grid}     	   & \cite{gephi}    & 4941 & 6594  & 24705&  109779 & 7.97 & 18.01 & 0.18 & 26.31 \\
        \texttt{Java}         	       & GD'06 Contest   & 1538 & 7817  & 7690 &  32712  & 4.08 & 30.17 & 0.96 & 35.35 \\
        \texttt{migrations}        	   & \cite{cui08}    & 1715 & 6529  & 8575 &  41451  & 3.89 & 30.57 & 1.16 & 35.75 \\
        \hline
    \end{tabularx}
    \centering
    \smallskip
    \caption{Statistics on test graphs and performance of the algorithm steps (in seconds).}
    \label{algorithm_speed}
\end{table}

\paragraph{Performance}
The table shows the CPU times of the main algorithm steps. As can be seen, ordered
bundles can be constructed for graphs with several thousand of nodes and
edges in less than a minute. For medium sized graphs with hundreds of
nodes the algorithm can be applied in an interactive environment. The
most time expensive step is \emph{Routing Graph Optimization}.
The explanation for its long running time is twofold. First, it
continues to modify $\gv$ as long as it reduces $\Cost$, and we do not
have a good upper bound for the number of iterations.  Second, the step
makes a lot of geometric operations such as checking an
intersection between circles, polygons and line segments. These operations
are theoretically constant time but computationally expensive
in practice.  We conclude that the optimization of this step is
the primary direction for the running time improvement of our algorithm.

\paragraph{Selection of parameters}
We describe the influence of the parameters in $\Cost$ on the final drawing.
The weight of the ink component $\inkcoeff$ can be considered as the
``bundling strength'', since larger values of $\inkcoeff$ encourage more bundling. The path length component
is complimentary and has the opposite effect.
The capacity component plays an important role for graphs with large and
closely located node labels (see for example the narrow channel between nodes $S2$ and $S5$ in Fig.~\ref{fig:parameters2}). Based on our experiments, we
recommend selecting $\lencoeff$ significantly larger than $\inkcoeff$,
otherwise some paths can be too long. The capacity weight $\capcoeff$
is set larger by order of magnitude than the other coefficients.
In our default settings $\inkcoeff = 1$, $\lencoeff = 500$,
and $\capcoeff = 10(\inkcoeff + \lencoeff)$.

We may also adjust the widths of individual paths and path separation
(see Fig.~\ref{fig:parameters}). In the extreme case when the path separation and path widths
are set to zero, we obtain a drawing with overlapped edges.

\paragraph{Examples}
We now demonstrate ordered edge bundling algorithm on some real-world examples.
A \textit{migration} graph used for comparison of edge bundling algorithms is shown
in Fig.~\ref{fig:migration}. In our opinion, on a global scale ordered edge
bundles are aesthetically as pleasant as other drawings of the graph
(see e.g. \cite{cui08,hu11,holten09,lambert10}).
On a local scale, our method outperforms previous approaches by arranging edge intersections.
A smaller example of edge bundling is given in Fig.~\ref{fig:tail}. It shows
another advantage of our routing scheme. Multiple edges are
visualized separately making them easier to follow (compare the edge
between nodes \textit{Editor} and \textit{Application} on the original and bundled drawings).

\begin{figure}[t]
	\subfigure[]{
	\includegraphics[width=0.50\linewidth]{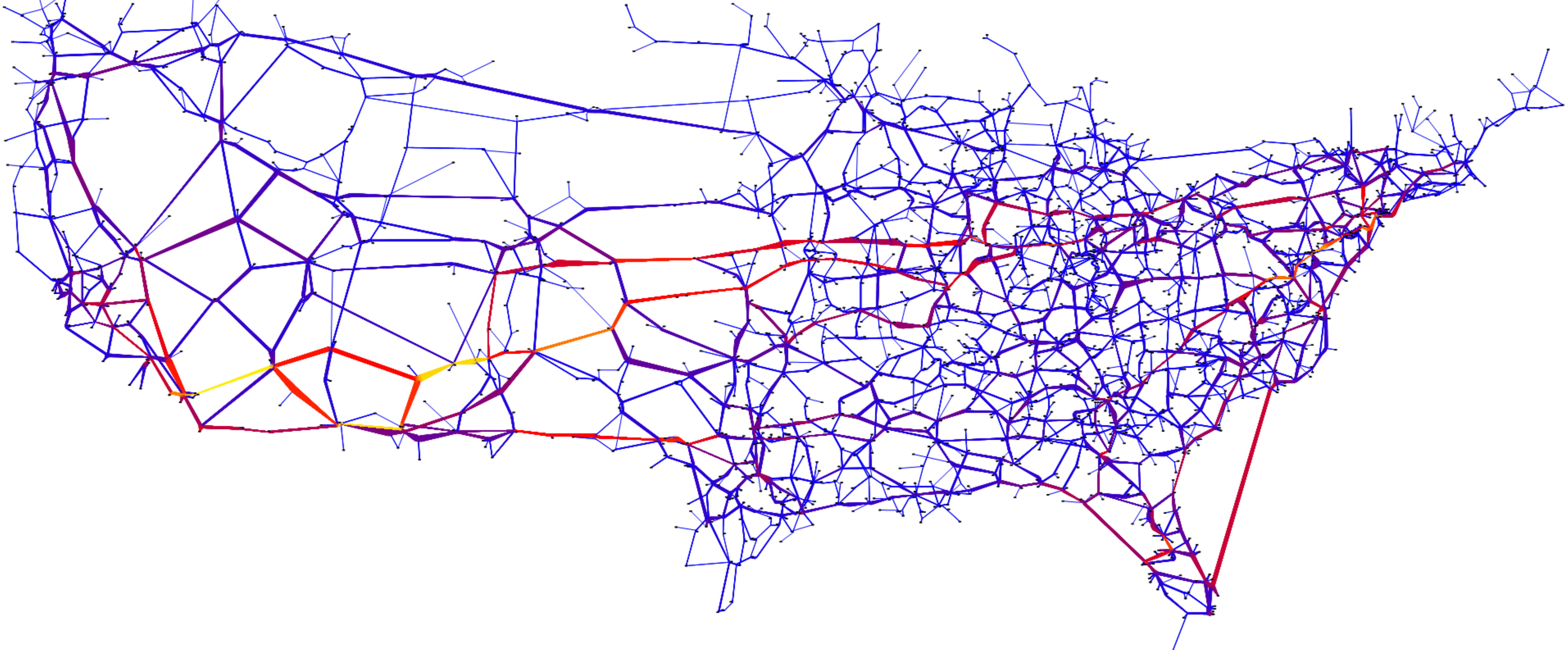}}
~~~~~~
	\subfigure[]{
	\includegraphics[width=0.4\linewidth]{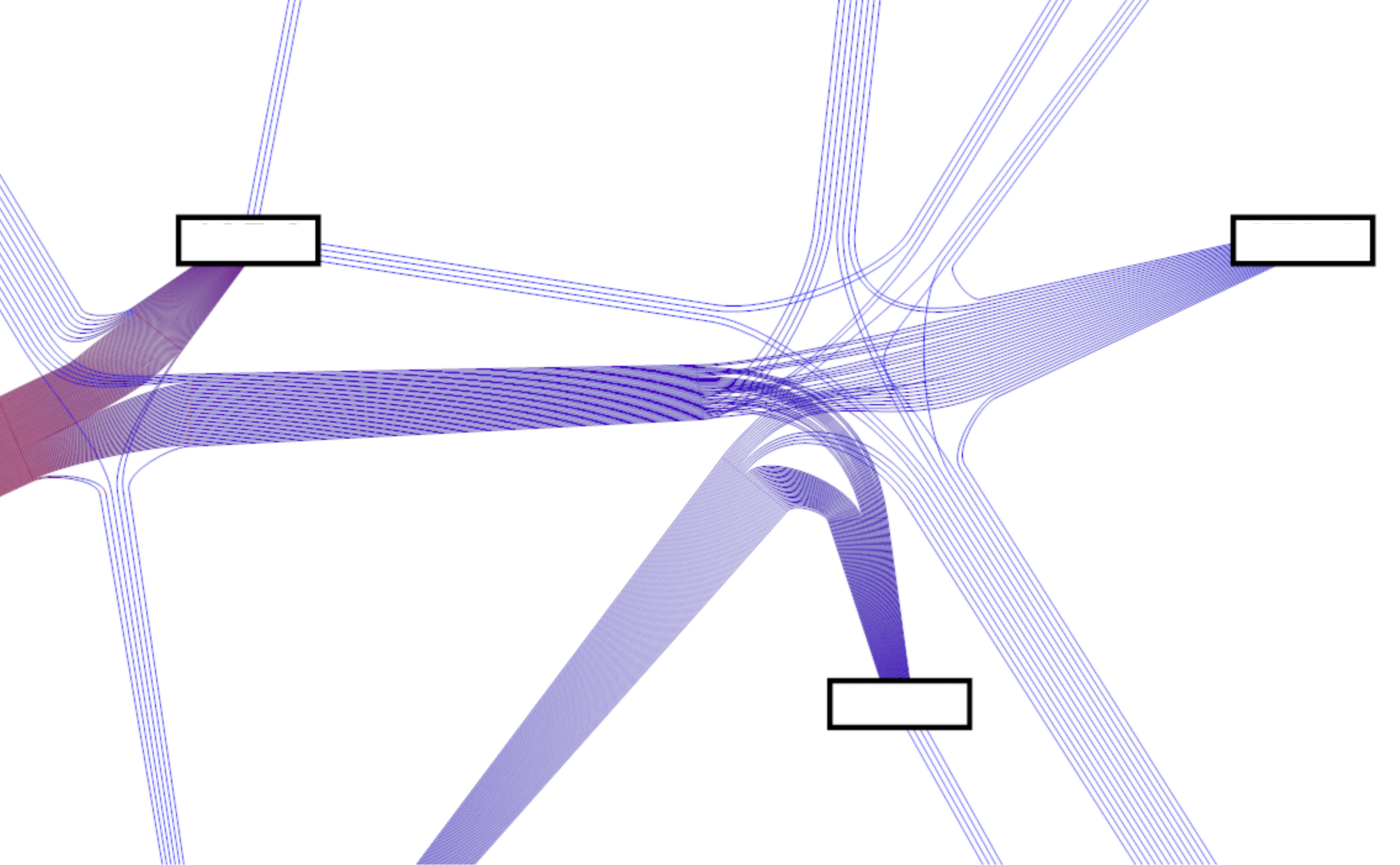}}
	\centering
	\caption{\texttt{Migration} graph. (a) Overview. (b) Detail.}
    \label{fig:migration}
\end{figure}

\begin{figure}[t]
	\subfigure[]{
	\includegraphics[height=6cm]{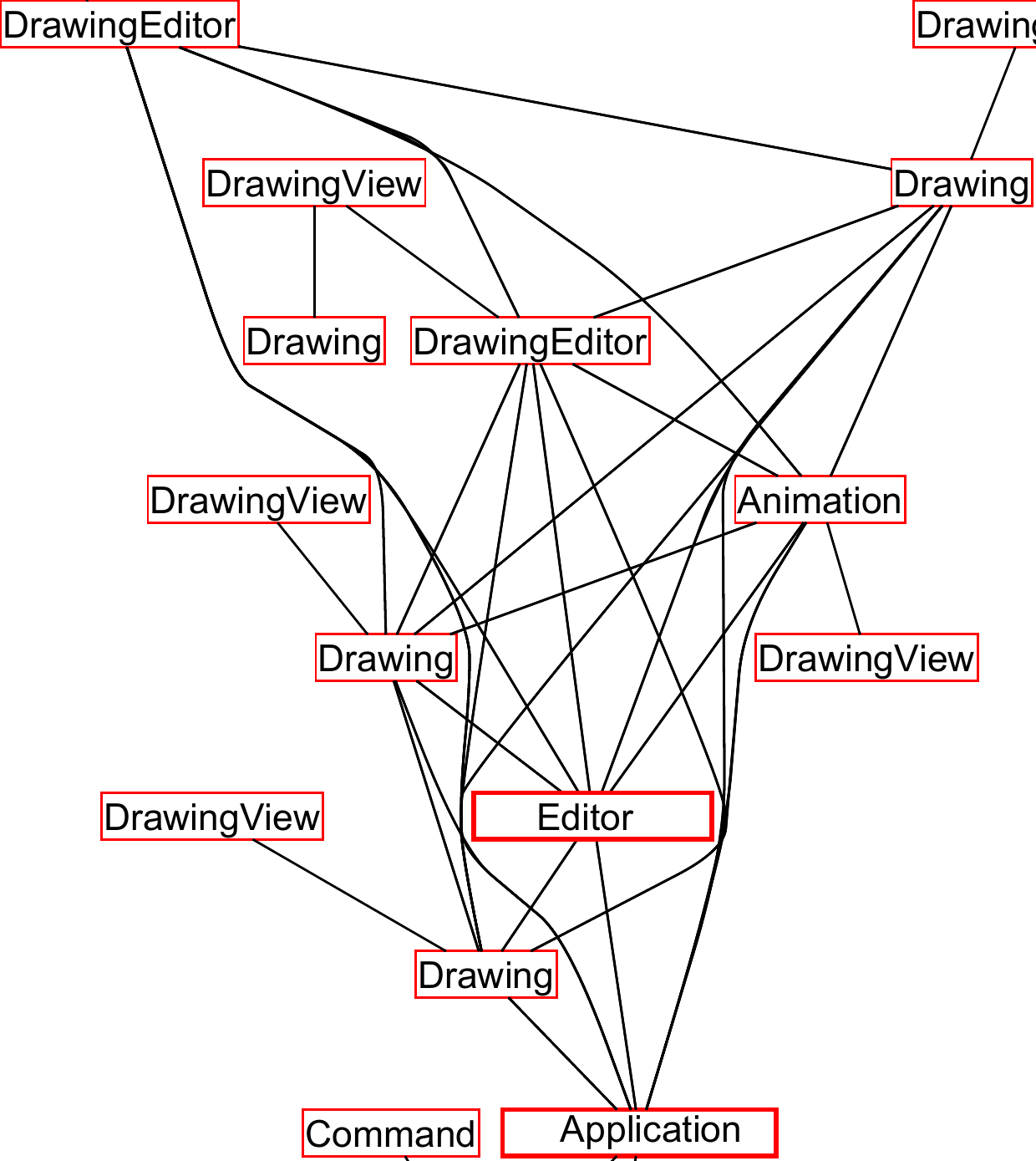}}
~~~~~~~~~~
	\subfigure[]{
	\includegraphics[height=6cm]{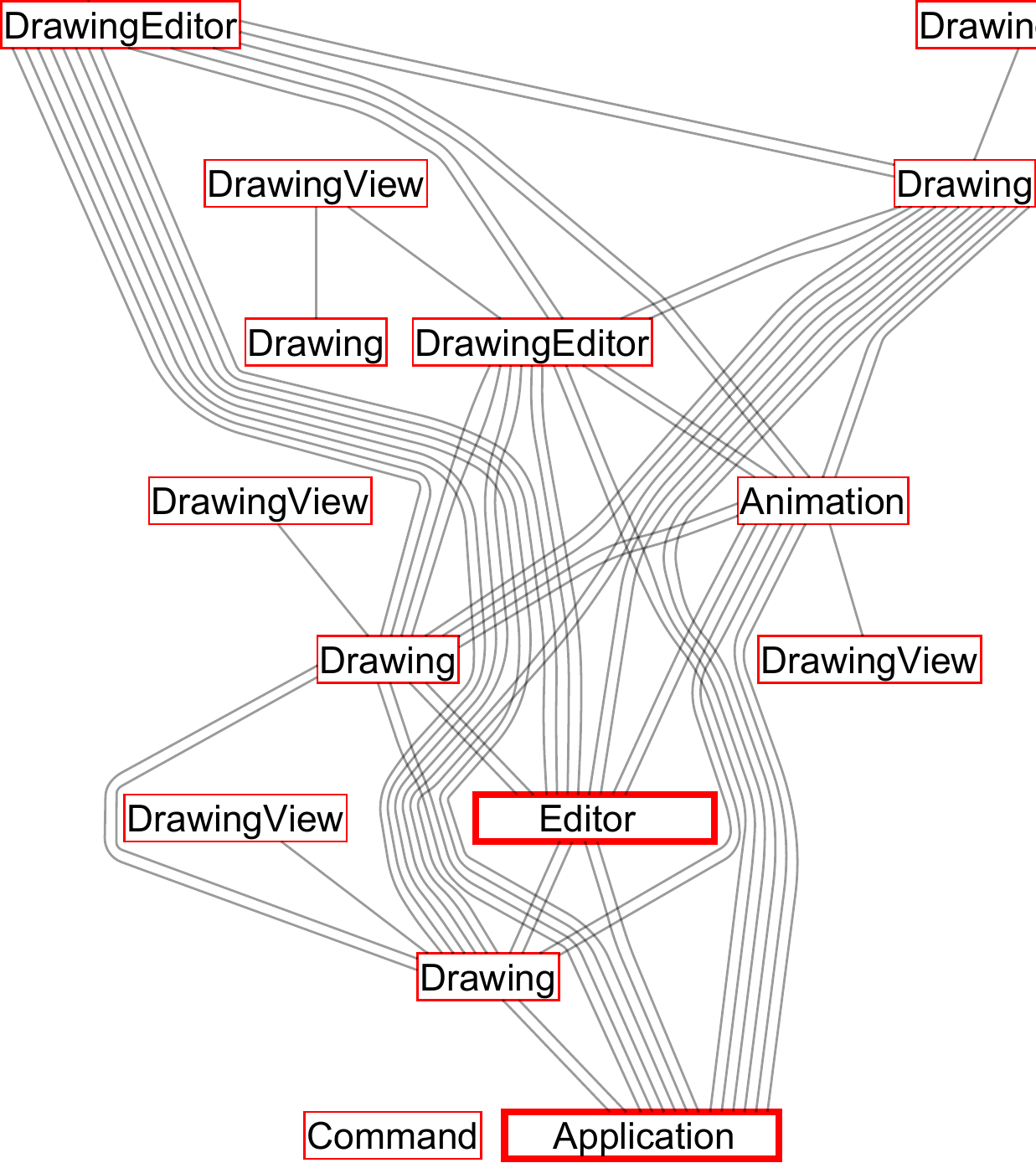}}
	\centering
	\caption{\texttt{Tail} graph. (a) Standard routing computed with the visibility graph approach.
    (b) Edge bundling routing.}
    \label{fig:tail}
\end{figure}

\begin{figure}[t]
	\subfigure[]{
    \label{fig:parameters1}
	\includegraphics[height=6cm]{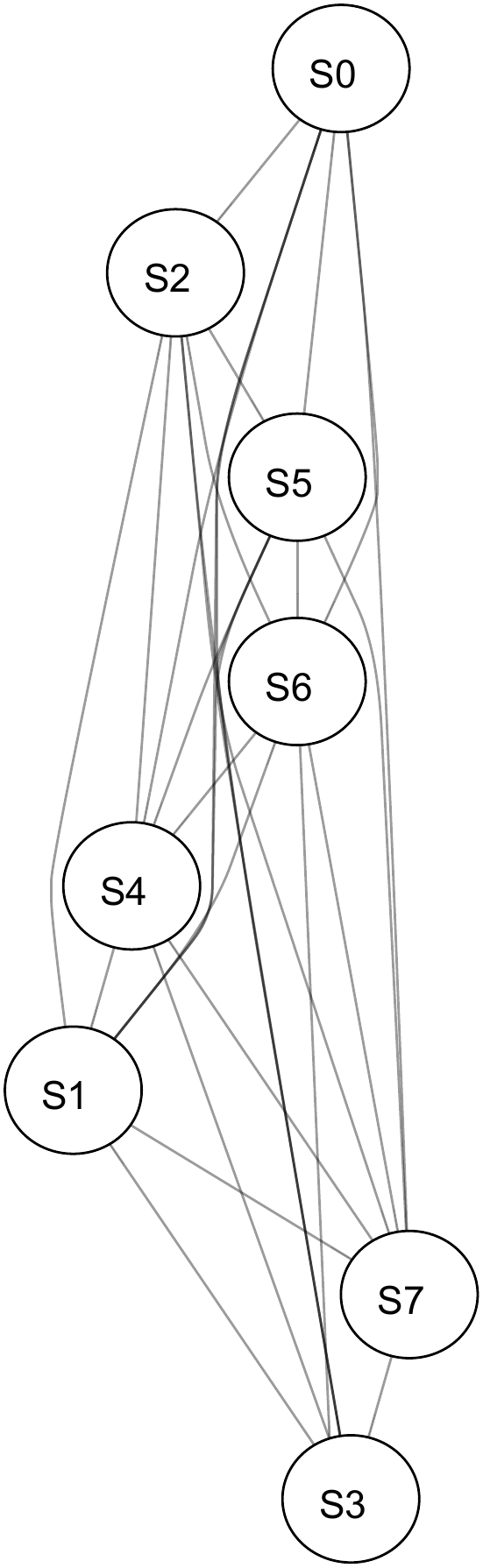}}
~~~~~~~~~~
	\subfigure[]{
    \label{fig:parameters2}
	\includegraphics[height=6cm]{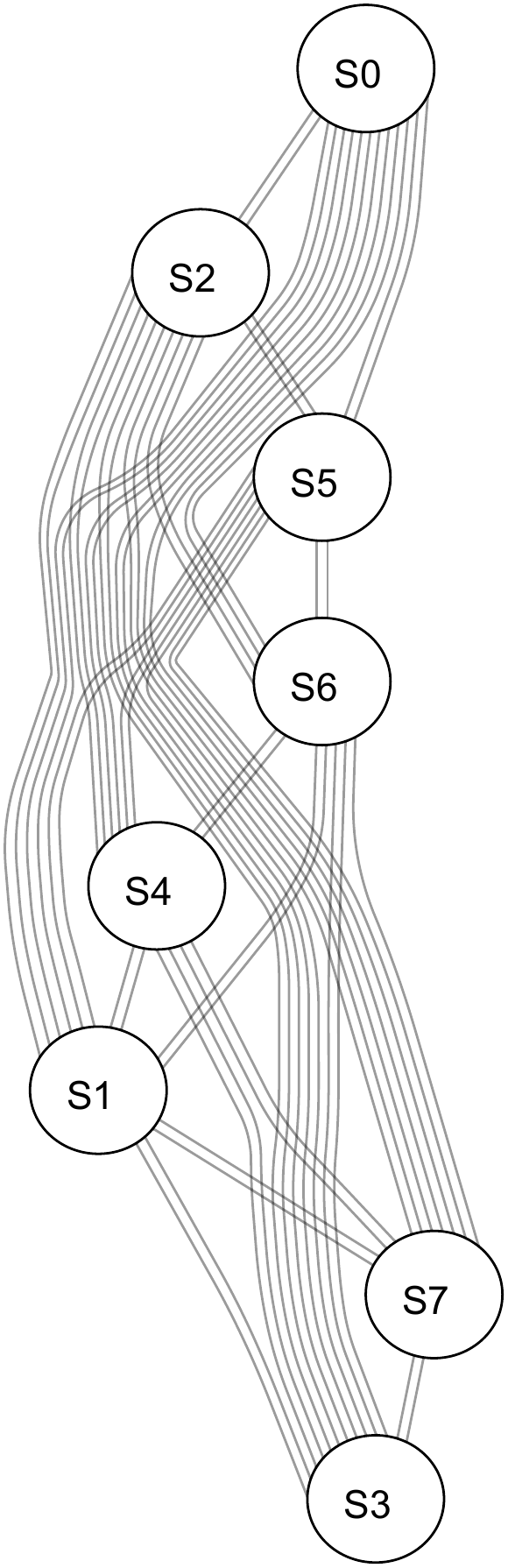}}
~~~~~~~~~~
	\subfigure[]{
    \label{fig:parameters3}
	\includegraphics[height=6cm]{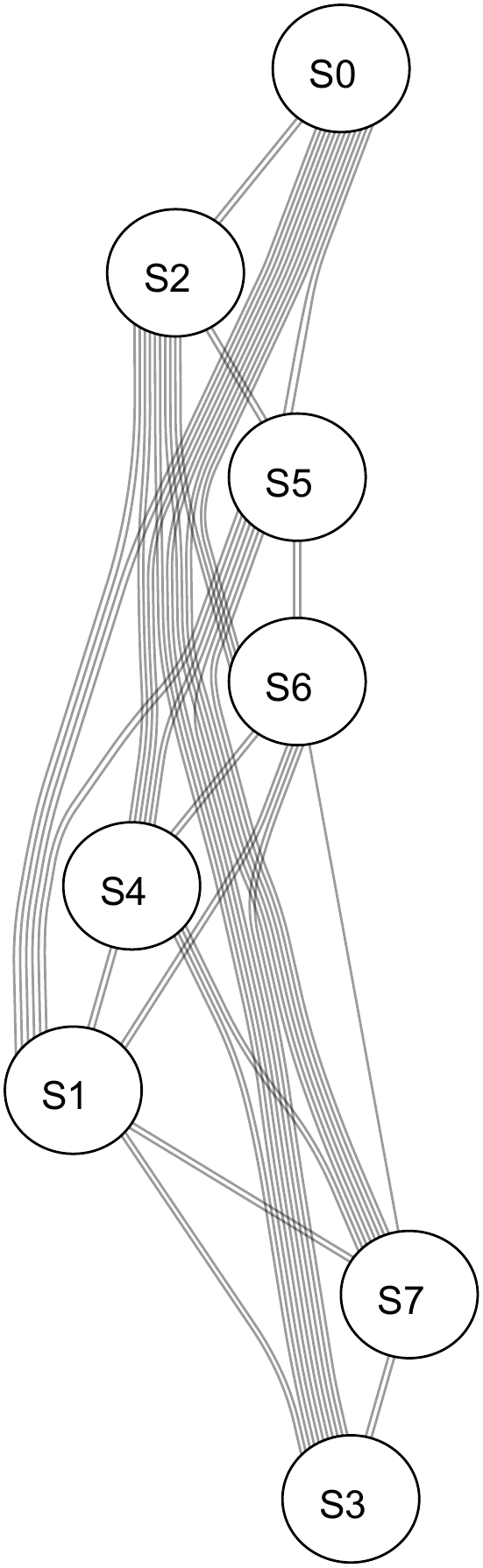}}
~~~~~~~~~~
	\subfigure[]{
    \label{fig:parameters4}
	\includegraphics[height=6cm]{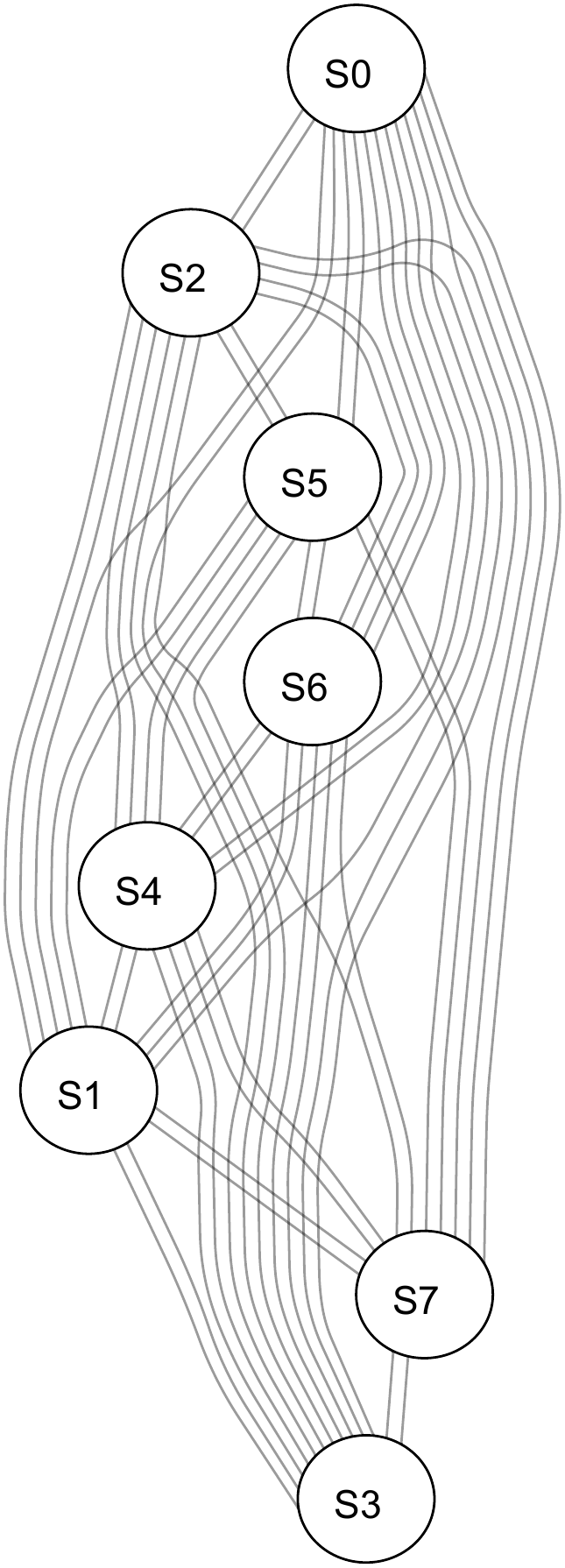}}
~~~~~~~~~~
	\subfigure[]{
    \label{fig:parameters5}
	\includegraphics[height=6cm]{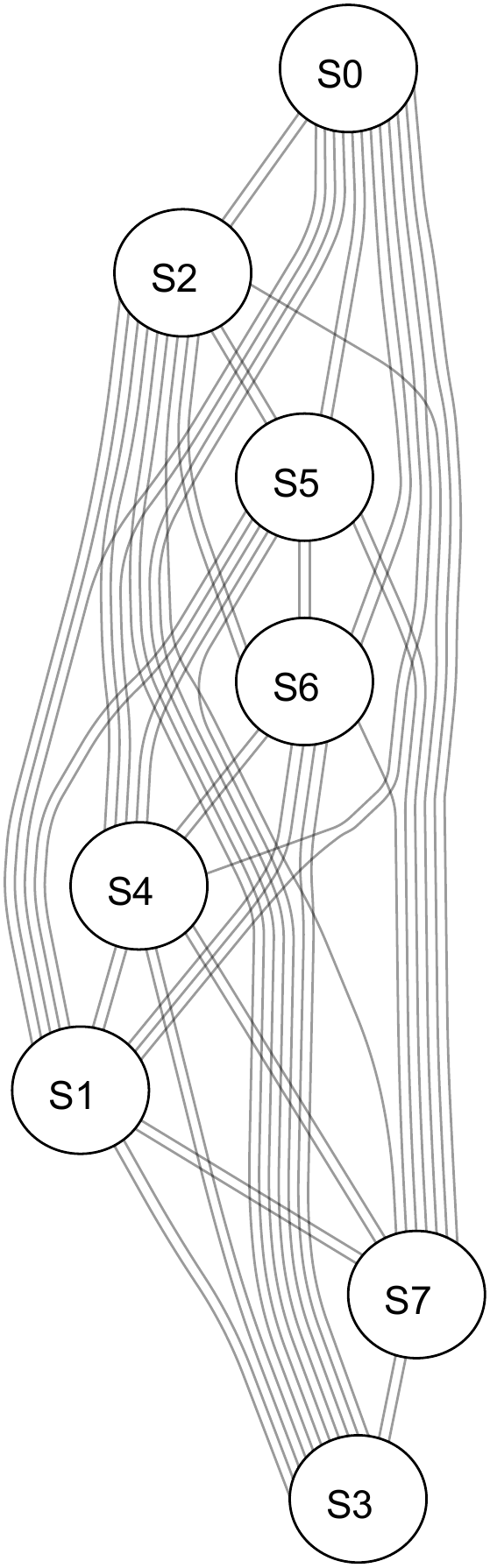}}
	\centering
	\caption{(a)~Classical edge routing produced with visibility graph with many edge overlaps.
    (b)~No capacity overflow penalty, $\capcoeff=0$.
    (c)~Small edge separation.
    (d)~Large edge separation.
    (e)~Edge bundling with the default settings.}
    \label{fig:parameters}
\end{figure}

\paragraph{Limitations}
The main limitation of our method that we are aware of is in
choosing radii and positions for hubs. Sometime we do not utilize the 
available space in the best way. We plan next to improve this heuristic.

\section{Conclusions and Future Work}
\label{sec:conclusion}
We have presented a new
edge routing algorithm based on ordered bundles that improves
the quality of single edge routes when compared to existing
methods. Our technique differs from classical edge bundling, in
that the edges are not allowed to actually overlap, but are run
in parallel channels. The algorithm ensures that the nodes do
not overlap with the bundles and that the resulting edge paths
are relatively short. The resulting layout highlights the edge
routing patterns and shows significant clutter reduction.

An important contribution of the paper is an efficient
algorithm that finds an order of edges inside of bundles with
minimal number of crossings. As mentioned above, this order is
not unique.  The question of choosing the best order is a topic for
future research.

A further possible direction concerns dynamic
issues of edge bundling algorithm. First, a user may want to
interactively change a bundled graph or change node positions.
In that case, a system should not completely rebuild a drawing,
but recalculate affected parts only.  Second, a small deviation
of algorithm parameters (e.g. ink importance $\inkcoeff$) may
theoretically involve a full reconstruction of the routing,
while a smooth transformation is preferable.

\end{document}